\newcommand{\p}{\pi}
\newcommand{\Pot}{P}
\newcommand{\f}{f}
\newcommand{\xh}{\hat x'}
\newcommand{\sh}{\hat s'}
\newcommand{\vvw}{\acute}
\newcommand{\wvw}{\grave}
\DeclareMathOperator{\diag}{diag}
\DeclareMathOperator{\CEF}{CEF}
\DeclareMathOperator{\gap}{gap}
\DeclareMathOperator{\poly}{poly}
\DeclareMathOperator{\st}{st}
\DeclareMathOperator{\Exp}{E}
\let\eps\varepsilon
\newcommand{\ones}{\mathds{1}}
\newcommand{\NN}{\ensuremath{\mathbb{N}}}
\newcommand{\ZZ}{\ensuremath{\mathbb{Z}}}
\newcommand{\QQ}{\ensuremath{\mathbb{Q}}}
\newcommand{\RR}{\ensuremath{\mathbb{R}}}
\newcommand{\RRpos}{\ensuremath{\RR_{>0}}}
\newcommand{\QQnneg}{\ensuremath{\QQ_{\ge 0}}}
\newcommand{\RRnneg}{\ensuremath{\RR_{\ge 0}}}
\DeclareMathOperator{\Iin}{in}
\newcommand{\din}{\ensuremath{\delta^{\Iin}}}
\DeclareMathOperator{\Oout}{out}
\newcommand{\dout}{\ensuremath{\delta^{\Oout}}}
\definecolor{orange}{RGB}{235,90,0}
\definecolor{darkorange}{RGB}{175,30,0}
\definecolor{turkis}{RGB}{131,182,182}
\definecolor{darkturkis}{RGB}{31,82,82}
\definecolor{green}{RGB}{102,180,0}
\definecolor{darkgreen}{RGB}{51,90,0}
\definecolor{myblue}{RGB}{0,0,213}
\definecolor{mydarkblue}{RGB}{0,0,100}
\definecolor{mybrightblue}{RGB}{0,0,230}
\definecolor{lila}{RGB}{102,0,102}
\definecolor{darkred}{RGB}{139,0,0}
\definecolor{darkyellow}{RGB}{188,135,2}
\definecolor{brightgray}{RGB}{200,200,200}
\definecolor{darkgray}{RGB}{50,50,50}
\newtheorem{theorem}{Theorem} 
\newtheorem{lemma}{Lemma}
\newtheorem{definition}{Definition}
\title{A Combinatorial $\tilde O(m^{3/2})$-time Algorithm for the Min-Cost Flow Problem}
\author{Ruben Becker$^{1,2,3}$ and 
Andreas Karrenbauer$^{1,2}$ \\[1mm]
  \normalsize{\texttt{[ruben,karrenba]@mpi-inf.mpg.de}}\\[2mm]
  \normalsize{$^{1}$ Max Planck Institute for Informatics, Saarbr\"ucken, Germany} \\[1mm]
  \normalsize{$^{2}$ Max Planck Center for Visual Computing and Communication} \\[1mm]
  \normalsize{$^{3}$ Saarbr\"ucken Graduate School for Computer Science}
}
\begin{document}

\maketitle

\begin{abstract}
\noindent
We present a combinatorial method for the min-cost flow problem
and prove that its expected running time is bounded by $\tilde O(m^{3/2})$.
This matches the best known bounds, which previously have only been achieved by
numerical algorithms or for special cases.
Our contribution contains three parts that might be interesting in their own right:
(1) We provide a construction of an equivalent auxiliary network and interior primal and dual points 
with potential $P_0=\tilde{O}(\sqrt{m})$ in linear time.
(2) We present a combinatorial potential reduction algorithm 
that transforms initial solutions of 
potential $P_0$ to ones with duality gap below $1$ in $\tilde O(P_0\cdot \mbox{CEF}(n,m,\epsilon))$ time, 
where $\epsilon^{-1}=O(m^2)$ and $\mbox{CEF}(n,m,\epsilon)$ denotes the running time 
of any combinatorial algorithm that computes an $\eps$-approximate electrical flow. 
(3) We show that solutions with duality gap less than $1$ suffice to compute optimal integral potentials in $O(m+n\log n)$
time with our novel crossover procedure. 
All in all, using a variant of a state-of-the-art $\eps$-electrical flow solver, 
we obtain an algorithm for the min-cost flow problem running in $\tilde O(m^{3/2})$.
\end{abstract}

%\begin{keyword}
%min-cost flow, interior point methods, combinatorial optimization, 
%potential reduction methods, laplacian systems
%\end{keyword}

%%%%%%%%%%%%%%%%%%%%%%%%%%%%%%%%%%%%%%%%%%%%%%%%%%%%%%%%%%%%%%%%%%%%%%%%%%%%%%%%%%%%%%%%%%%%%%%%%%%%%%%%%%%%%%%%%
%%%%%%%%%%%%%%%%%%%%%%%%%%%%%%%%%%%%%%%%%%%%%%%%%%%%%%%%%%%%%%%%%%%%%%%%%%%%%%%%%%%%%%%%%%%%%%%%%%%%%%%%%%%%%%%%%
%%%%%%%%%%%%%%%%%%%%%%%%%%%%%%%%%%%%%%%%%%%%%%%%%%%%%%%%%%%%%%%%%%%%%%%%%%%%%%%%%%%%%%%%%%%%%%%%%%%%%%%%%%%%%%%%%
\section{Introduction}\label{intro}
%%%%%%%%%%%%%%%%%%%%%%%%%%%%%%%%%%%%%%%%%%%%%%%%%%%%%%%%%%%%%%%%%%%%%%%%%%%%%%%%%%%%%%%%%%%%%%%%%%%%%%%%%%%%%%%%%
%%%%%%%%%%%%%%%%%%%%%%%%%%%%%%%%%%%%%%%%%%%%%%%%%%%%%%%%%%%%%%%%%%%%%%%%%%%%%%%%%%%%%%%%%%%%%%%%%%%%%%%%%%%%%%%%%

The min-cost flow problem is one of the most well-studied problems in 
combinatorial optimization. Moreover, it represents an important special case of Linear 
Programming due to the integrality of the primal and dual polyhedra for arbitrary given 
integer costs $c$, capacities $u$, and demands $b$. That is, there are always integral 
primal and dual optimal solutions provided that the problem is feasible and finite. 
Since these solutions can be computed in polynomial time, min-cost flow algorithms are 
important building blocks in tackling many other problems. Combinatorial flow 
algorithms have dominated in past decades. However, interior point methods have been 
used more and more to solve several network flow problems, for example very 
successfully in the case of the max-flow problem, e.g.\ 
recently~\cite{DBLP:journals/corr/Madry13}. By now, numerical methods lead the 
``horse-race'' of the most efficient algorithms for various combinatorial problems. 
This is somewhat unsatisfactory. In particular for sparse graphs, the running time 
bounds of all combinatorial min-cost flow algorithms known from literature fail to 
break through the barrier of $n^2$, whereas Daitch and 
Spielman~\cite{DBLP:conf/stoc/DaitchS08} were the first to present an interior point 
method running in $\tilde O(m^{3/2})$ expected time.\footnote{Throughout this paper, 
the $\tilde O$-notation is used to hide $\log$-factors in $n$, $\|c\|_{\infty}$, 
$\|u\|_{\infty}$, $\|b\|_{\infty}$.} It is a dual central path following 
method. However, their algorithms are not combinatorial in any sense. 
In fact, they solve a more general problem and thus their method is more 
technical than necessary for the classical min-cost flow problem. 
It uses an efficient randomized solver for symmetric diagonally dominant (SDD) systems of 
linear equations based on the seminal work of Spielman and Teng~\cite{DBLP:conf/stoc/SpielmanT04}
 and later by Koutis et al.~\cite{DBLP:conf/focs/KoutisMP10}.
%That is where the randomization comes from. 
% Unfortunately, certainly also due to the fact that the methods of Spielman and Daitch
% solve generalized flow problems as well, their analysis is somewhat technical and the 
% methods seem a bit over-involved for the classical min-cost flow problem. 
% We wish to reclaim some ground in this respect.
Only recently, Kelner et al.~\cite{DBLP:conf/stoc/KelnerOSZ13} presented a simple, 
combinatorial, nearly-linear time algorithm for the \emph{electrical flow} problem 
and thus also for finding approximate solutions to  SDD system.
It is combinatorial in the sense that it operates on the rationals 
and uses only the field operations (addition, subtraction, multiplication, division) as 
arithmetic operations. 
% Since solving linear SDD-systems can be reduced to computing 
% electrical flows, this gives hope to also tackle more involved flow problems in a 
% similar combinatorial way.
However, this alone is not sufficient to obtain an entirely combinatorial algorithm for the 
min-cost flow problem. 

In this paper, we present a primal-dual potential reduction algorithm 
that uses a variant of the algorithm of Kelner et al.~as a subroutine.
Moreover, it is combinatorial in the same sense as their algorithm is. 
In particular, our method does not compute 
square roots or logarithms. 
The logarithms and square roots in this paper solely appear in the analysis, e.g. in the 
potential function that we use to guide our search. All running times in this paper are 
stated in terms of basic operations that also include comparisons in addition to the 
arithmetic operations.

After constructing an auxiliary network with the same optimum and primal and dual 
interior solutions of sufficient low potential for it, we update these interior 
% primal and dual 
points such that the potential function decreases by at least some constant in
each step. The potential function serves us in two ways: (1) when it 
drops below $0$, the duality gap is smaller than $1$ and we may stop, (2) it keeps us
away from the boundary. We thereby take a shortcut through the polyhedron instead of 
walking on the boundary as with most of the combinatorial methods, e.g.~minimum-mean 
cycle canceling. We distinguish primal and dual steps
and show that $\tilde O(\sqrt{m})$ steps are sufficient.
The combinatorial interpretation is as follows. A primal step changes flow along cycles 
(which could be linear combinations of simple cycles).
As mentioned above, the updates are guided by the potential function 
or more precisely by its gradient w.r.t.~the primal variables at the current 
point. To this end, the gradient is projected onto the cycle-space. However, if 
the gradient is (nearly) orthogonal to the cycle-space, then we would not make sufficient 
progress. But in this case, the gradient is shallow w.r.t.~the cut-space, which 
is the orthogonal complement of the cycle-space. 
Hence, we can do a dual step by modifying the dual variables corresponding to a cut.
Computing the projection
is equivalent to solving an electrical flow problem, where the resistances of the arcs
are higher the smaller the corresponding values of the primal variables are. 
The gradient determines the current sources. Intuitively, approaching the boundary is 
impeded, because arcs with large resistance carry rather small quantities of electrical 
flow. We also give a novel method that takes the points of duality gap less than
one and computes optimal integral potentials in near-linear time.
Before we describe our contribution in more detail, we highlight other related work.

\subsection{Other Related Work}
We denote $U:=\|u\|_{\infty}$, $C:=\|c\|_{\infty}$ and
$\gamma:=\max\{C,U\}$. 
Edmonds and Karp~\cite{edmonds2} gave the first polynomial-time min-cost flow algorithm in 1970. 
It can be implemented in $O(m(m+n\log n)\log U)$ time~\cite{schrijver}.
Since then, there were many contributions on combinatorial flow algorithms. 
We mention some of the most important results such as the
strongly polynomial time algorithm by Orlin~\cite{DBLP:conf/stoc/Orlin88} running in $O(n^2\log^2 n + nm\log n)$.
Further scaling techniques like (generalized) cost-scaling were presented by Goldberg and Tarjan~\cite{Goldberg:1990:FMC:92217.92225}, 
and the double scaling technique by Ahuja et al.~\cite{DBLP:journals/mp/AhujaGOT92}.
The latter yields a running time bound of $O(nm\log\log U \log (nC))$.
As of yet, all combinatorial algorithms are at least quadratic in $n$ even for sparse graphs. 
Only in the special case of small capacities, the algorithm of Gabow and Tarjan~\cite{DBLP:journals/siamcomp/GabowT89},
achieves $\tilde O(m^{3/2})$, however its general bound of 
$O((m^{3/2}U^{\nicefrac{1}{2}} + \|u\|_1 \log \|u\|_1)\log(nC))$ is only pseudo-polynomial.

Karmarkar~\cite{DBLP:conf/stoc/Karmarkar84} presented
a polytime interior point method for solving linear programs in 1984.
After the ellipsoid method, this was the second type of method 
with polynomial running time.
Karmarkar's algorithm needs $O(n^{3.5}L^2)$ time, where $n$ is the number of variables and $L$ the number 
of bits in the input. The work on interior point methods, and in particular on 
so-called potential reduction methods, was significantly advanced by 
Ye~\cite{DBLP:journals/mp/Ye91} in 1991. % The new bound was $O(n^3L)$.
He presented an $O(n^3L)$-time algorithm. %, where $n$ and $L$ are as above.
Since interior point methods are known to be, asymptotically, the fastest methods 
for solving general linear optimization problems, there has been a huge interest 
in their application for solving network flow problems. 
To the best of our knowledge, the first attempt to analyze interior point methods, 
particularly for min-cost flow, was done by Vaidya~\cite{DBLP:conf/focs/Vaidya89a} in 1989.
He obtained a running time of $O(n^2 \sqrt{m} \log(n\gamma))$, 
which matched the best known bound then up to log-factors. 
Wallacher and Zimmermann~\cite{DBLP:journals/mp/WallacherZ92} found a combinatorial 
interior point method in 1992, which they analyzed to run in $O(nm^2L)$. Thus it could 
not keep up with the best combinatorial methods known at that time. 

\subsection{Our contribution}
Our main contribution is a proof of the following theorem.
\begin{theorem}\label{main theorem}
  There is a combinatorial algorithm for the min-cost flow problem terminating
  in
  \[
    O(m^{3/2} (\log \gamma + \log n ) \log^3 n \log\log n) = \tilde O(m^{3/2})
     \text{ 
    time, with high probability.
     }
 \]
\end{theorem}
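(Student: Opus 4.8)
\emph{Proof plan.} The plan is to assemble the theorem from the three building blocks announced in the abstract, which I would state and prove as separate lemmas, and then to charge the running time; throughout, the algorithm uses only field operations and comparisons, logarithms and square roots appearing solely in the analysis, so ``combinatorial'' is automatic once each block is. \emph{Step 1: reduction to a well-centered auxiliary network.} First I would write the min-cost flow instance as the linear program $\min\{c^\top f : Af = b,\ 0 \le f \le u\}$ with its dual and apply the linear-time construction to obtain an equivalent auxiliary instance on $O(m)$ arcs --- equivalent in the sense that its optimal value agrees and an optimal solution maps back to one of the original --- together with an explicit pair of strictly interior primal and dual feasible points. The construction must be arranged so that the primal--dual potential $P$ that steers the search (a barrier-type potential that becomes negative exactly when the duality gap falls below $1$ and blows up towards the boundary of the feasible region) has value $P_0 = \tilde O(\sqrt m)$ at this explicit pair. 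This is where integrality and the bound $\gamma = \max\{\|c\|_\infty,\|u\|_\infty\}$ enter: after a suitable rescaling a near-central trivial solution has all barrier terms and the initial gap polynomially bounded in $m$ and $\gamma$, so each contributes only an $O(\log\gamma + \log n)$ amount.

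\emph{Step 2: combinatorial potential reduction.} Next I would run the potential reduction algorithm from this pair. In each iteration it projects the primal gradient of $P$ onto the cycle space, which amounts to solving an electrical flow problem whose arc resistances are inversely proportional to the current primal slacks; I would solve it only approximately, to relative accuracy $\epsilon$ with $\epsilon^{-1} = O(m^2)$, with a combinatorial $\epsilon$-electrical-flow solver running in $\mathrm{CEF}(n,m,\epsilon)$ time. The analysis then has to establish: (a) if the approximate projected gradient is long, a primal step along it, with a carefully bounded step length keeping all slacks positive, decreases $P$ by at least a constant; (b) if it is short, the true gradient is shallow on the cut space --- the orthogonal complement of the cycle space --- so a dual step across the associated cut decreases $P$ by a constant; and (c) accuracy $\epsilon^{-1} = O(m^2)$ is enough that the approximation error is dominated by this constant gain in either case. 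Since $P$ never increases, drops by $\Omega(1)$ per iteration, and forces the duality gap below $1$ as soon as it goes negative, after $\tilde O(P_0) = \tilde O(\sqrt m)$ iterations we hold a primal--dual pair of duality gap below $1$, at total cost $\tilde O(\sqrt m)\cdot \mathrm{CEF}(n,m,\epsilon)$.

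\emph{Step 3: crossover and accounting.} Finally I would feed this fractional pair to the crossover procedure. Because the optimal dual vertex is integral and has integral reduced costs, a pair of duality gap below $1$ already determines, for every arc, on which side of the optimal supporting hyperplane it lies, so a single shortest-path-style sweep extracts optimal integral potentials --- and, from the admissible subgraph, an optimal integral flow --- in $O(m + n\log n)$ time. It remains to bound the time: plugging in a variant of the state-of-the-art combinatorial electrical-flow solver of Kelner et al., $\mathrm{CEF}(n,m,\epsilon) = O(m \log^2 n\,\log\log n\cdot \log\epsilon^{-1})$, and using $\epsilon^{-1} = O(m^2)$, hence $\log\epsilon^{-1} = O(\log n)$, the $\tilde O(\sqrt m)$ iterations cost $O(m^{3/2}\log^3 n\,\log\log n)$; the $O(\log\gamma + \log n)$ factor hidden inside $P_0$, i.e.\ inside the iteration count, then yields exactly $O(m^{3/2}(\log\gamma + \log n)\log^3 n\,\log\log n)$. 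The electrical-flow solver is the only randomized ingredient, so a union bound over the $\tilde O(\sqrt m)$ calls, boosting each call's success probability at the cost of an $O(\log n)$ factor that the bound absorbs, makes the running time hold with high probability, while Steps 1 and 3 are deterministic.

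\emph{Main obstacle.} The delicate point is Step 2(c): proving that an $\epsilon$-approximate electrical flow with only $\epsilon^{-1} = O(m^2)$ still guarantees a constant decrease of $P$, both when the flow is used directly in a primal step (where one must simultaneously control the step length so feasibility is not violated) and, more subtly, in the rule that chooses between a primal and a dual step when the true projected gradient has borderline length. Calibrating that threshold against the accumulated approximation error, so that neither a too-coarse solve nor a mis-classified step can stall the progress, is where the real effort goes; everything else is composition and bookkeeping.
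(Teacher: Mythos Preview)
Your plan matches the paper's three-part assembly almost exactly, and the running-time accounting is right. Two points in Step~3 need correction, though.

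First, the crossover does \emph{not} work by the mechanism you describe: a duality gap below~$1$ does not by itself tell you, arc by arc, which constraints are tight at the optimum (the approximate slacks can all be strictly positive). The paper's crossover is instead a Dijkstra/Prim-style greedy procedure that grows a set $S^k$ one vertex at a time, always shifting the potentials on $V\setminus S^k$ by the minimum slack of an arc crossing the cut in the ``right'' direction (outgoing if $b(S^k)<0$, incoming otherwise). Feasibility is maintained inductively, one new tight arc is created per step so the output is integral, and each shift can only increase $b^\top y$; since the integral optimum equals $\lceil b^\top y^0\rceil$ when the gap is below~$1$, the output is optimal. Your ``determines which side of the supporting hyperplane'' intuition is not the argument and would not obviously go through.

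Second, the crossover yields only optimal \emph{potentials} in $O(m+n\log n)$; it does not produce an optimal primal flow. To recover an optimal integral flow the paper runs a separate max-flow computation on the admissible subgraph (arcs with zero reduced cost), using Goldberg--Rao in $O(m^{3/2}\log(n^2/m)\log U)$, which is absorbed in the final bound. So the overall $\tilde O(m^{3/2})$ claim survives, but your $O(m+n\log n)$ for the flow is wrong as stated. (The paper also handles infeasibility detection here via the auxiliary arcs, which you omit; it is minor but needed for completeness.)
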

In contrast to previous results with this time bound %of $\tilde O(m^{3/2})$,
our algorithm is combinatorial. Moreover, our analysis is less technical
and very comprehensible though rigorous. It fits on about 12 pages. 
More precisely, our contribution contains the following parts that might
be interesting on their own. 
\begin{enumerate}
	\item We show that it suffices to compute
	primal and dual points with duality gap below 1, since our novel crossover
	procedure finds optimal potentials in linear time then.
	\item We give a combinatorial potential reduction method that, taking interior 
	points of potential $P_0$, outputs interior points 
	with duality gap below 1 in $\tilde O(P_0 \cdot \CEF(n,m,\eps))$, 
	here $\CEF(n,m,\epsilon)$ is the complexity of an $\eps$-electrical 
	flow computation. 
	\item We give a method that, taking any min-cost flow problem as input, yields an 
	auxiliary network with the same optimum and interior primal and dual points of 
	potential $P_0=O(\sqrt{m}(\log \gamma +\log n))$ in linear time.
\end{enumerate}
Our crossover procedure takes solutions with duality gap less than one and 
efficiently rounds the potentials to integral values. Using one max-flow computation 
in the admissible network, one can also obtain primal optimal solutions. 
We remark that this max-flow computation is not needed if the input costs are 
randomly perturbed such that the optimal solution gets unique as it is for example 
done in~\cite{DBLP:conf/stoc/DaitchS08}. In this case the admissible network 
is a tree and the corresponding tree solution can be obtained easily.
For the combinatorial potential reduction method, we show how to use \emph{approximate} 
electrical flow computations to reduce the duality gap of given primal and dual interior points of potential $P_0$ below any constant $c\in\RRpos$ in time 
$\tilde O(P_0 \cdot \CEF(n,m,\eps))$. We show that it suffices to pick an 
$\eps$ such that $\eps^{-1}$ is polynomially bounded in $n$ (i.e.\ 
$\eps^{-1}=O(m^2)$). Note that $\CEF$ typically scales logarithmically with 
$\eps^{-1}$ and thus its contribution is in $O(\log n)$. 
In order to make this method combinatorial, we show how to normalize the cycle around 
which we are augmenting flow by the infinity norm, as opposed to previous 
approaches, where the normalization was done with the 2-norm. This would require 
computing square roots and thus is not allowed in our setting. 

\subsection{The Min-Cost Flow Problem and its Dual}
\label{overview}
In its most general form, the min-cost flow problem is stated as follows. 
Given a directed 
graph $G=(V,A)$ with $|V|=n$ and $|A|=m$, node demands $b \in \ZZ^n$ with 
$\ones^T b = 0$, arc costs $c\in \ZZ^m$ and arc capacities $u \in(\NN\cup \{\infty\})^m$,
find a feasible flow $ x^* \in \RR^m$, i.e. $0 \le x^* \le u$ and $x^*(\din(v)) 
- x^*(\dout(v)) = b_v$ for every $ v\in V$, \footnote{We write 
$f(S):= \sum_{a\in S} f_a$ for $S\subseteq A$ for any vector $f\in \RR^m$.} 
such that $c^T x^* \le c^Tx$ for all 
feasible flows $x$ or assert that no such flow exists. However, it is well-known, 
see e.g.~\cite{DBLP:books/daglib/0069809}, that this problem can be reduced to a 
setting without capacity constraints and only non-negative costs. 
Furthermore, we assume w.l.o.g.\ that the problem is feasible as well as finite from 
now on. We will discuss how to reduce the general problem to the setting used here in 
Section~\ref{sec:initial}.
For the time being, we write the problem as primal-dual pair
\[
 \min\{c^Tx \; : \; A x = b \text{ and } x \ge 0\} =  \max\{b^Ty \; : \; A^T y + s = c \text{ and } s \ge 0\},
\]
where $A\in \{-1,0,1\}^{n\times m}$ is the node-arc incidence matrix of $G$, i.e.\ $A$ 
contains a column $\alpha$ for every arc $(v,w)$ with $\alpha_v = -1$, $\alpha_w = 1$ 
and $\alpha_i=0$ for all $i\notin\{v,w\}$. 
The overloaded notation $A$ for the 
set of arcs as well as for the node-arc incidence matrix is intended, because they are isomorphic. 

\section{Snapping to the Optimum}
\label{sec:crossover}
In this section, we show that solving the min-cost flow problem approximately,
by the means of computing primal and dual solutions $x$ and $y^0,s^0$ of duality gap 
less than 1, is sufficient, since optimal integral potentials can then be found in linear 
time.
The main underlying idea of our new linear time rounding procedure is the following. We 
iteratively construct sets $S^k$, starting with $S^1:=\{s\}$ for an arbitrary vertex 
$s$. During one iteration $k$, we proceed as follows. Let us first assume $b(S^{k})<0$. 
Then, there has to be an outgoing arc from $S^{k}$, otherwise the problem would be
infeasible. We enlarge $S^{k}$ by the vertex $\hat w$ such that $a^k=(\hat v,\hat w)$ 
for $\hat v\in S^{k}$ has minimal slack among all outgoing arcs from $S^{k}$ and we 
increase the potentials $y_w$ of all $w\in V\setminus S^{k}$ by this minimal slack. 
It follows that the dual constraint of the arc $a^k$ is satisfied with slack 0 and all 
other non-negativity constraints remain fulfilled. The objective value $b^Ty$ will be 
increased by this potential shift, since $b(V\setminus S^{k})>0$. In the case 
$b(S^{k})\ge 0$, we decrease the potentials in $V\setminus S^k$, analogously by the 
minimum slack of all ingoing arcs. However to achieve a near-linear running time, 
these potential changes need to be performed in a lazy way. Using Fibonacci heaps, 
we can even reduce the running time to $O(m + n\log n)$. We give the pseudo-code of 
this method in Algorithm~\ref{crossoveralg} and show its correctness in 
Theorem~\ref{crossoverthm}.

\providecommand{\potshift}{\Delta}
\begin{algorithm}[t!]
\DontPrintSemicolon
\SetKwData{Left}{left}\SetKwData{This}{this}\SetKwData{Up}{up}
\SetKwFunction{Union}{Union}\SetKwFunction{FindCompress}{FindCompress}
\SetKwInOut{Input}{Input}\SetKwInOut{Output}{Output}
\Input{Connected graph $G=(V, A)$, solution $x$ and $y^0,s^0$ in $G$ with $x^Ts^0<1$.}
\Output{Optimal vertex potentials $y$ in $G$.}
\BlankLine
	Let $s\in V$ be arbitrary and let
	$\potshift^0 := - y^0_s$, $y_s \leftarrow 0$, $S^1:=\{s\}$.\;
	\For{ $k=1,\ldots , n-1$ }{
		\If{$b(S^{k}) < 0$ \textbf{or} $\din(S^{k}) = \emptyset$}{
			Let $\potshift^k = \min \{ c_a + y_v - y^0_w : a=(v,w) \in \dout(S^{k}) \}$ \;
			and $a^k=(v^k, w^k) \in \dout(S^{k})$ s.t.~$\potshift^k = c_{a^k} + y_{v^k} - y^0_{w^k} $.\;
		}
		\Else{
			Let $\potshift^k = - \min \{ c_a + y^0_w - y_v : a=(w,v) \in \din(S^{k}) \}$ \;
			and $a^k=(w_k,v_k) \in \din(S^{k})$ s.t.~$ \potshift^k = - c_{a^k} - y_{w^k}^0 + y_{v^k}$ .\;
		}
	$y_{\hat w} \leftarrow y^0_{\hat w} + \potshift^k$, $S^{k+1} \leftarrow S^{k} \cup \{\hat w\}$\;
	}
	\Return potentials $y$.
\caption{Crossover}
\label{crossoveralg}
\end{algorithm}
\begin{theorem}
\label{crossoverthm}
Let Algorithm~\ref{crossoveralg} be initialized with primal and dual solutions $x$ and
$y^0,s^0$ with $x^Ts^0<1$. 
	The algorithm outputs optimal integral potentials $y$
	in $O(m +  n\log n)$.
\end{theorem}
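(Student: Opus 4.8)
The plan is to maintain, as a loop invariant at the start of iteration $k$, that the \emph{current potential} $\tilde y^{(k)}$ defined by $\tilde y^{(k)}_v:=y_v$ for $v\in S^k$ and $\tilde y^{(k)}_w:=y^0_w+\potshift^{k-1}$ for $w\in V\setminus S^k$ is dual feasible, is integral on $S^k$, and satisfies $b^T\tilde y^{(k)}\ge b^Ty^0$; the theorem then follows by inspecting $\tilde y^{(n)}=y$. The observation that drives the bookkeeping is that $\potshift^k$ is exactly the \emph{cumulative} shift: because $v$ runs over $S^k$ and $w$ over $V\setminus S^k$, the quantity $c_a+y_v-y^0_w$ minimized by the algorithm differs from the true slack $c_a+\tilde y^{(k)}_v-\tilde y^{(k)}_w$ of $a=(v,w)$ only by the constant $\potshift^{k-1}$; hence $\potshift^k=\potshift^{k-1}+\delta_k$, with $\delta_k$ the smallest slack over $\dout(S^k)$ (and $\ge 0$ by dual feasibility of $\tilde y^{(k)}$) in the \texttt{if} branch and $\delta_k=-(\text{smallest slack over }\din(S^k))\le 0$ in the \texttt{else} branch. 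Thus going from $\tilde y^{(k)}$ to $\tilde y^{(k+1)}$ is nothing but ``shift every vertex of $V\setminus S^k$ by $\delta_k$, then freeze $\hat w=w^k$ at $y^0_{\hat w}+\potshift^k$'', and $b^T\tilde y^{(1)}=b^Ty^0$ since $\tilde y^{(1)}=y^0+\potshift^0\ones$ and $\ones^Tb=0$.

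I would first check that the selected branch always ranges over a nonempty arc set. From $Ax=b$ one has $b(S^k)=x(\din(S^k))-x(\dout(S^k))$, so $b(S^k)<0$ forces $x(\dout(S^k))>0$ and hence $\dout(S^k)\ne\emptyset$, while $\din(S^k)=\emptyset$ together with connectedness of $G$ and $\emptyset\ne S^k\subsetneq V$ again forces $\dout(S^k)\ne\emptyset$; the \texttt{else} branch is only entered when $\din(S^k)\ne\emptyset$. Dual feasibility is preserved because arcs inside $S^k$ are untouched, arcs inside $V\setminus S^k$ keep slack $s^0_a\ge 0$, and shifting $V\setminus S^k$ by $\delta_k$ changes the slack of a $\dout(S^k)$-arc by $-\delta_k$ and of a $\din(S^k)$-arc by $+\delta_k$; since in either branch we shift towards the arc of least slack, all slacks stay $\ge 0$ and $a^k$ becomes tight. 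Hence $\tilde y^{(n)}=y$ is dual feasible, and as a byproduct $a^1,\dots,a^{n-1}$ form a spanning tree of tight arcs.

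Next, the objective cannot decrease: the change of $b^T\tilde y$ in iteration $k$ is $\delta_k\,b(V\setminus S^k)=-\delta_k\,b(S^k)$, and the branch condition makes $\delta_k$ and $b(S^k)$ have opposite signs — note that $\din(S^k)=\emptyset$ implies $b(S^k)=-x(\dout(S^k))\le 0$ — so this change is $\ge 0$ and $b^Ty\ge b^Ty^0$. For integrality, unfolding the update gives $y_{\hat w}=y^0_{\hat w}+\potshift^k=y_{v^k}+c_{a^k}$ in the \texttt{if} branch and $y_{\hat w}=y_{v^k}-c_{a^k}$ in the \texttt{else} branch, so $y_{\hat w}\in\ZZ$ by induction with base case $y_s=0$. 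Finally, writing $\mathrm{OPT}$ for the common optimum of the primal-dual pair, which is an integer since the data are integral and the problem feasible and finite, we get from $Ax=b$, $A^Ty^0+s^0=c$, $x^Ts^0<1$ and primal feasibility of $x$ that $c^Tx-b^Ty^0=x^Ts^0<1$, hence $b^Ty\ge b^Ty^0>c^Tx-1\ge\mathrm{OPT}-1$; weak duality applied to the dual-feasible $y$ gives $b^Ty\le\mathrm{OPT}$, and since $b^Ty\in\ZZ$ we conclude $b^Ty=\mathrm{OPT}$, i.e.\ $y$ is optimal.

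For the $O(m+n\log n)$ bound I would run the search in the style of Prim/Dijkstra with two Fibonacci heaps: for $z\notin S^k$ keep $d^{\mathrm{out}}(z)=\min\{c_{(v,z)}+y_v:(v,z)\in\dout(S^k)\}$ with key $d^{\mathrm{out}}(z)-y^0_z$, and $d^{\mathrm{in}}(z)=\min\{c_{(z,v)}-y_v:(z,v)\in\din(S^k)\}$ with key $y^0_z+d^{\mathrm{in}}(z)$; then $\potshift^k$ and $\hat w$ are the minimum of the first heap in the \texttt{if} branch and the negated minimum of the second in the \texttt{else} branch, $b(S^k)$ is updated incrementally by $+b_{\hat w}$, and $\din(S^k)=\emptyset$ is detected by emptiness of the second heap. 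Each iteration costs one $\mathsf{extract\text{-}min}$ and one $\mathsf{delete}$ (removing $\hat w$ from the other heap), plus, after fixing $y_{\hat w}$, a scan of the arcs at $\hat w$ with one $\mathsf{insert}$ or $\mathsf{decrease\text{-}key}$ per arc in each heap; this is $O(n)$ $\mathsf{extract\text{-}min}$/$\mathsf{delete}$ at $O(\log n)$ amortized and $O(m)$ $\mathsf{insert}$/$\mathsf{decrease\text{-}key}$ at $O(1)$ amortized, hence $O(m+n\log n)$. I expect the main difficulty to be conceptual rather than computational: arranging the cumulative-shift bookkeeping so that one minimization simultaneously preserves dual feasibility and does not hurt $b^Ty$, and in particular justifying why the branch must be chosen on ``$b(S^k)<0$ \emph{or} $\din(S^k)=\emptyset$'' rather than on the sign of $b(S^k)$ alone; once the invariant is set up, each iteration reduces to a short case check and the timing is a routine two-heap adaptation of the standard Fibonacci-heap analysis.
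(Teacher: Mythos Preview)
Your proposal is correct and follows essentially the same approach as the paper: both maintain the invariant that the ``current'' potentials (your $\tilde y^{(k)}$, their $y^k$) are dual feasible, recognize that $\potshift^k=\potshift^{k-1}\pm s^{k-1}_{a^{k-1}}$ is a cumulative shift, show the objective is nondecreasing via the sign match between $\delta_k$ and $-b(S^k)$, deduce integrality from the tight-tree recursion $y_{\hat w}=y_{v^k}\pm c_{a^k}$, and obtain the $O(m+n\log n)$ bound via two Fibonacci heaps. Your write-up is in fact a bit more explicit than the paper's on the well-definedness of the branches and on the optimality step, but the underlying argument is the same.
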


\begin{proof} 
		We assume w.l.o.g.\ that the vertices are labeled $1,\ldots ,n$ in the order in which
		they are added to $S$. We show, by induction, that the potentials
		\[
			y_v^k =
			\begin{cases}
				y_v^0 + \potshift^{k-1}, &  k \le v \\
				y^{k-1}_v, &   k > v
			\end{cases}
			\quad \text{are feasible, i.e. }  
			s_a^k := c_a + y_v^k -y_w^k \ge 0 \text{ for all } a=(v,w).
		\] 
		For the induction base, we note that $y_v^1$ is just
		$y_v^0$ shifted by $\potshift^0=-y_s^0$ and hence it constitutes valid potentials. 
		For the inductive step let us consider iteration $k>1$ and let $a=(v,w)$ be an arbitrary arc.
		Let $i:=\min\{v,w\}$ and $j:=\max\{v,w\}$.
		With the convention $c_{(j,i)} = -c_{(i,j)}$ and thus $s^k_{(j,i)} = -s^k_{(i,j)}$, we obtain
		\[
			s^k_{(i,j)} 
			= c_{(i,j)} + y^k_i - y^k_j 
			= c_{(i,j)} +
				\begin{cases}
					y_i^0 - y_j^0 , 									& k \le i \\
					y_i^{k-1} - (y_j^0 + \potshift^{k-1}) , 	& i < k \le j \\
					y_i^{k-1} - y_j^{k-1}, 						& j < k
				\end{cases}.
		\]
		For the first and third case, we apply the induction hypothesis and obtain
		$s_a^k\ge 0$.
		For the second case, we first note that
		\begin{align*}
			\potshift^{k-1} 
			 &= \sigma \cdot c_{a^{k-1}} + y_{v^{k-1}} - y^0_{w^{k-1}}   
			 \qquad \text{where } \sigma =
			\begin{cases}
				1
				& \text{if } b(S^{k-1}) < 0\text{ or }\din(S^{k-1})=\emptyset \\
				-1 
				& \text{otherwise}
			\end{cases}\\
			& = \sigma \cdot c_{a^{k-1}} + (y_{v^{k-1}} - \potshift^{k-2}) - y^0_{w^{k-1}} + \potshift^{k-2}
			= \sigma \cdot s^{k-1}_{a^{k-1}} + \potshift^{k-2}
		\end{align*} 
		Since $i < k \le j$ and thus $(i,j)\in \dout(S^{k-1})$, this yields
		\begin{align*}
			s_{(i,j)}^k
			&=
				c_{(i,j)} + y_i^{k-1} - (y_j^0  + \potshift^{k-2}) - \sigma \cdot s^{k-1}_{a^{k-1}} 
			=
				s^{k-1}_{(i,j)}  - \sigma \cdot s^{k-1}_{a^{k-1}}  
		\end{align*}
		Independent of $a$ being $(i,j)$ or $(j,i)$, we get 
		$s_{a}^k = s_a^{k-1} \pm s_{a^{k-1}}^{k-1} \ge 0$ since $a^{k-1}$ is a minimizer 
		and by the non-negativity of the slacks due to the induction hypothesis.
		Hence, the output potentials are \emph{feasible}.
		In addition, we construct one tight constraint in each iteration, 
		since $s_a^k = 0$ if $a=a^{k-1}$. 
		Since $y_s=0$ and $c\in\ZZ^m$, we conclude that after termination $y$ is 
		\emph{integral}.
		Note that the optimum objective value is integer and thus $\lceil b^Ty^0 \rceil$ because $x^Ts^0 <1$.
		We have
		\begin{align*}
			b^Ty^{k} - b^Ty^{k-1}
			&=	\sum_{v\in V} b_v y^{k}_v - \sum_{v\in V}b_v y^{k-1}_v
			% &=	\sum_{v\ge k} b_v y^{k}_v 		+ \sum_{v < k} b_v y^{k}_v - 
			% (\sum_{v\ge k} b_v y^{k-1}_v	+ \sum_{v < k} b_v y^{k-1}_v)\\
			=	\sum_{v\ge k} b_v (y^{0}_v + \potshift^{k-1} )		%+ \sum_{v < k} b_v y^{k-1}_v - 
					- \sum_{v\ge k} b_v y^{k-1}_v\\	%- \sum_{v < k} b_v y^{k-1}_v\\
			&=	\sum_{v\ge k} b_v y^{0}_v + \sum_{v\ge k} b_v \potshift^{k-1}		
					 - \sum_{v\ge k} b_v y^{0}_v - \sum_{v\ge k} b_v \potshift^{k-2}
			% = ( \potshift^k	- \potshift^{k-1}) \sum_{v\ge k} b_v
			= -\sigma \cdot s^{k-1}_{a^{k-1}} \cdot b(S^{k-1}) \ge 0
		\end{align*}
		because $\din(S^{k-1}) = \emptyset$ implies that $b(S^{k-1}) \le 0$ or that the instance is infeasible.
		Since $b,y\in \ZZ^n$ and $ b^Ty - \lceil b^T y^0 \rceil <1$ we 
		have that $y$ is \emph{optimal}. A similar implementation
		as used for Dijkstra's or Prim's algorithms but with two Fibonacci Heaps, one for the nodes
		adjacent to $S^{k}$ through $\din(S^{k})$ and $\dout(S^{k})$ each,
		yields the run time of $O(m+ n\log n)$. 
\end{proof}

\section{Combinatorial Potential Reduction Algorithm}
\label{sec:primdual}
% \subsection{The Potential Function}
We will now describe our Combinatorial Potential Reduction Algorithm, it maintains a 
primal solution $x$ and dual slacks $s$. We evaluate such a pair by
the \emph{potential function}
\begin{align*}
 \Pot(x,s) := q \ln (x^T s) - \sum_{a\in A} \ln (x_a s_a) - m \ln m
\end{align*}
for some scalar $q=m+p\in \QQ$ to be chosen later. Note that the duality gap 
$x^Ts=b^Ty -c^Tx$ serves as measure for the distance to optimality of $x$ and $s$. 
An equivalent formulation of the potential function yields
\begin{align}
  \label{equipot}
  \Pot(x,s) 
  = p \ln (x^T s) + m \ln \Big( %\underbrace{
    \frac{1}{m} \sum_{a\in A} x_a s_a %}_{AM}
    \Big) 
  - m \ln \Big( %\underbrace{
    \sqrt[m]{\prod_{a\in A} x_a s_a} %}_{GM}
    \Big)
  \ge p \ln (x^T s),
\end{align}
because the arithmetic mean %$AM$ 
is bounded by the geometric mean % $GM$ 
from below.
Thus, $\Pot(x,s)<0$ implies $x^Ts<1$. As we have shown in Section~\ref{sec:crossover},
solutions satisfying $x^Ts<1$ can be efficiently rounded to integral optimal solutions. 
Thus, we follow the strategy to minimize the potential function by a combinatorial 
gradient descent until the duality gap drops below 1.
\footnote{This method is similar to Ye's primal-dual 
algorithm~\cite{DBLP:journals/mp/Ye91}. We mostly follow the notation and proof 
strategy from lecture notes of Michel Goemans on Linear Programming.} 
To this end, we shall project the gradient 
$g:= \nabla_x \Pot = \frac{q}{x^Ts} s - X^{-1}\ones$,
where $X:= \diag(x)$, on the cycle space of the network. 
However, we do not use the standard scalar product for the projection but a skewed one as it is 
common in the literature on interior point methods.
This skewed scalar product may also be considered as the standard one in a scaled space
where $x$ is mapped to $X^{-1}x = \ones$. By setting $s':= Xs$, 
the duality gap $x^Ts=\ones^Ts'$ and
the potential function $\Pot(x,s)=\Pot(\ones,s')$ remain unchanged. % under this scaling operation. 
Accordingly, we define $\bar A := AX$ and 
$g':= \nabla_x P |_{x=\ones, s=s'} = Xg$.

We start with given initial primal and dual solutions $x,s$ or rather
with their analogs $\ones, s'$ in the scaled space, 
which may be found for example with our initialization method described in 
Section~\ref{sec:initial}. 
Now, it would be desirable to move $x'$ in the direction 
of $-g'$, the direction of steepest descent of the potential function. However, 
$g'$ may not be a feasible direction, since $\bar A g' \neq 0$ in general.
Thus, we wish to find a direction $d'$ in the kernel 
of $\bar A$ that is closest to $g'$.\footnote{ The kernel of $\bar A$, up to the scaling with $X$, corresponds to the
cycle space of the graph.} 
Computing $d'$ amounts to solve the optimization problem
\begin{align}
\label{elflow}
 \min\{\|g'-d'\|_2^2 : \bar A d' = 0\}=
 \min\{ \| f \|_{R}^2 : A f = \chi\},
\end{align}
where we set $f=X(g' - d')$, $R=X^{-2}$ and $\chi = \bar A g'$. The latter is actually
an electrical flow problem. We briefly review electrical flows, for 
more details, see for example~\cite{DBLP:conf/stoc/KelnerOSZ13}.

\subsection{Electrical Flows}
% We stick to the notation from previous sections, where $A$ denoted the node-arc-incidence matrix.
Let $\chi\in \QQ^n$ be a \emph{current source} vector with $\ones^T\chi=0$ and 
let $r\in\QQnneg^{m}$ be a resistance vector on the arcs, denote $R=\diag(r)$ and
$\|v\|_R:=\sqrt{v^TRv}$ for $v\in \RR^m$.
% The \emph{energy $E(\f)$} of a flow $\f\in \RR^m$ with $A\f=\chi$ is defined as 
% $E(\f):=\f^T R \f=\|\f\|_R^2$, where $R=\diag(r)$. 
% The electrical flow is the unique flow that minimizes the energy.
\begin{definition}[Electrical Flow]
  Let $\chi \in \QQ^n$ with $\ones^T \chi = 0$.
  \begin{enumerate}
  \item The unique flow
  $
    \f^*\in\QQ^{m}
  $
  with $\|f^*\|_R^2 = \min\{\| \f \|_{R}^2 : A \f = \chi\}
  $
  is the \emph{electrical flow}. 
  
  \item Let $\eps \ge 0$ and $\f\in \RR^m$ with $A f =\chi$ and
  $
    \|\f\|_R^2 \le (1+\eps)\|\f^*\|_R^2,
  $ 
  then $\f$ is called an \emph{$\eps$-electrical flow}.
  \item Let $s$ be a fixed node, $T$ a spanning tree, $P(s,v)$ the unique path
  in $T$ from $s$ to $v$ and $\f\in \RR^m$.
 	The \emph{tree induced voltages $\p\in\RR^n$} are defined by 
  $\p(v):=\sum_{a\in P(s,v)} \f_a r_a$.
  % \item For a flow $\f\in \RR^m$, a spanning tree $T$ and a fixed node $s$, 
  % the \emph{tree induced voltages $\p\in\RR^n$} are defined by 
  % $\p(v):=\sum_{a\in P(s,v)} \f_a r_a$, where $P(s,v)$ is the unique path
  % in $T$ from $s$ to $v$. 
  \item For any $a=(v,w)\in A\setminus T$, we define $C_a:=\{a\}\cup P(v,w)$ %. 
  % With 
  and $r(C_a) := \sum_{b\in C_a} r_b$. %, we denote the resistance of a cycle $C_a$. 
  We write
  $\tau(T):=\sum_{a\in A\setminus T} r(C_a)/r_a$ for the \emph{tree condition number of $T$}.
  \end{enumerate}
\end{definition}
The dual of the electrical flow problem is
$
  \max\{ 2\p^T\chi - \p^T A R^{-1} A^T \p : \p\in \RR^n \},
$
where $\p$ are called \emph{voltages}.
% To see this, compute the Langrangian function 
% $\Lambda(\f,\p)= \f^T R \f + 2\p^T(\chi - A \f)$
% and the Langrangian dual function 
% \[
%   \lambda(\p):=\inf_{\f} \Lambda(\f,\p)=
%   2\p^T \chi - \p^T A R^{-1} A^T \p.
% \]
% Taking the gradient of $\lambda(\p)$ and setting it to zero shows that 
We conclude that an optimal solution $\p^*$ 
satisfies
$ A R^{-1} A^T \p^* = \chi$.

\begin{definition}[Certifying $\eps$-Electrical Flow Algorithm]
Let $\eps > 0$.
% \begin{enumerate}
%   \item
  A \emph{certifying $\eps$-electrical flow algorithm} is an algorithm that computes an
  $\eps$-electrical flow $\f$ and voltages $\p\in \QQ^n$ such that 
  \[
    \|\p - \p^*\|_{A R^{-1} A^T}^2
    \le \eps \|\p^*\|_{A R^{-1} A^T}^2,
  \]
  where $\p^*$ is an optimal dual solution.
  % \item 
  We define $\CEF(n,m,\eps)$ to be a bound on the running time of a certifying 
  $\eps$-electrical flow algorithm for directed graphs with $n$ nodes and $m$ arcs.
  % \end{enumerate}
\end{definition}
Kelner et al.~\cite{DBLP:conf/stoc/KelnerOSZ13} present a combinatorial 
$\eps$-electrical flow algorithm with expected approximation guarantee.
However, we transform their algorithm to one with an exact approximation guarantee 
and linear running time with high probability. 
Similarly to them, we compute a low-stretch spanning tree $T$ 
(w.r.t.~the resistances), which has tree condition number 
$\tau(T)=O(m\log n \log\log n)$ using the method of Abraham and 
Neiman~\cite{DBLP:conf/stoc/AbrahamN12} that runs in $O(m\log n \log \log n)$.
We then sample non-tree edges $a$ according to the same probability distribution 
$p_a:=\frac{1}{\tau(T)}\frac{r(C_a)}{r_a}$ and push flow along the cycle $C_a$ 
until the gap between primal and dual objective value becomes 
less than $\epsilon$. The running time of this approach is 
$O(m\log^2n \log (n/\eps) \log\log n) = \tilde O(m)$ for $\eps^{-1}=O(\poly(n))$
with high probability as we show in Theorem~\ref{runtimepseudocode}.
Note that it suffices for our purpose to mimic their \texttt{SimpleSolver}, 
which scales with $\log(n/\eps)$ instead of $\log(1/\eps)$ as their improved version 
does. We remark that, as in their solver, the flow updates should be
performed using a special tree data structure~\cite[Section~5]{DBLP:conf/stoc/KelnerOSZ13}, 
which allows updating the flow in $O(\log n)$. Moreover, $\gap$ should only be computed every $m$ iterations, which results in $O(1)$ 
amortized time per iteration for the update of $\gap$.
% We remark that the electrical flow $f$ output by the algorithm and all
% intermediate results are rational if the vectors $\chi$ and $r$ were.

\subsection{The Method}
Using any certifying $\eps$-electrical flow algorithm, 
we can compute an approximation of $d'$ by solving problem~\eqref{elflow} and obtain an $\eps$-electrical flow $f$. 
In the electrical flow problem the resistances $R$ are given by $X^{-2}$ and the current sources $\chi$ 
by $\bar A g'$. 
We compute a cycle $\xh = g'-X^{-1}f$ from the flow as well as a cut $\sh=\bar A^T \p$
from the voltages $\p$.
The idea is to push flow around the cycle $\xh$ in a primal step, whereas, in a dual step,
we modify the slacks along the cut $\sh$. 
In Ye's algorithm the decision whether to make a primal or dual step is made 
dependent on $\|d'\|_2$. In our setting, however, we do not know the exact projection 
$d'$ of $g'$. Nevertheless, we can show that the 2-norm of $z' = g' - \sh$ does not 
differ too much from $\|d'\|_2$, so deciding dependent on $\|z'\|_2^2$ is possible.
We note that another crucial difference between Ye's algorithm and our Combinatorial 
Potential Reduction Algorithm is that we normalize by $\max\{1, \|\xh\|_{\infty}\}$
in the primal step, where in Ye's algorithm the normalization is done with $\|\xh\|_2$,
which requires taking square roots and could thus yield irrational numbers.

We remark that our method works with any certifying $\eps$-electrical
flow algorithm. However, we merge
the version of the \texttt{SimpleSolver} of Kelner et 
al.~\cite{DBLP:conf/stoc/KelnerOSZ13} as described above 
in our pseudocode implementation of Algorithm~\ref{potredalg} to be more self-contained.

\begin{algorithm}[t]
\DontPrintSemicolon
\SetKwData{Left}{left}\SetKwData{This}{this}\SetKwData{Up}{up}
\SetKwFunction{Union}{Union}\SetKwFunction{FindCompress}{FindCompress}
\SetKwInOut{Input}{Input}\SetKwInOut{Output}{Output}
\Input{Feasible flow $x>0$ and feasible dual variables $y$ and $s>0$, parameter $\delta$}
%  obtained as described in Algorithm~\ref{balarcs}
\Output{Feasible flow $x>0$ and feasible dual variables $y$ and $s>0$ s.t.~$x^Ts<1$.}
\BlankLine
	\While{$x^Ts\ge 1$}{
		$g':=  \frac{q}{x^Ts} Xs - \ones $, $\chi:= \bar A g'$, $r:=X^{-2}\ones$\;
    /* \emph{$\eps$-electrical flow computation, similar to} 
    \texttt{SimpleSolver} \emph{in}~\cite{DBLP:conf/stoc/KelnerOSZ13} */\;
    $T:=$ low-stretch spanning tree w.r.t.~$r$, $\tau(T):=\sum_{a\in A\setminus T} \frac{r(C_a)}{r_a}$, 
    $p_a:= \frac{r(C_a)}{\tau(T) r_a} $\;
    $\f:=$ tree solution with $A\f=\chi$ for $T$,
    $\p:=$ tree induced voltages of $\f$ \;
    $\gap:=\f^TR\f - 2\p^T\chi + \p^T\bar A \bar A^T \p$\;
	  \Repeat{$\gap < \delta$}{
      Randomly sample $a\in A\setminus T$ with probability $p_a$\;
      Update $\f$ by pushing $\sum_{b\in C_a} r_b f_b/ r(C_a)$ flow through $C_a$ in the direction of $a$\;
      Occasionally compute tree-induced voltages $\p$ and $\gap$\;
    }
%     $\p:=$ tree induced voltages of $\f$ \;
    /* \emph{Move in primal or dual direction.} */\;
    Set $\xh:= g' - X^{-1}\f$, $ \sh = \bar A^T \p$ and $z'= g' - \sh$.\;
		\If{$\|z'\|_2^2 \ge 1/4$}{
		  Do a primal step, i.e.\ $x':= \ones 
      - \lambda \frac{\xh}{\max\{1,\|\xh\|_{\infty}\}}$, where $\lambda= 1/4$.\;}
		\Else{ Do a dual step, i.e.\ $s':= s' - \mu \sh$ and $y:= y + \mu \p$,
		where $\mu=\frac{\ones^Ts'}{q}$.\;}
  }
  \Return $x$ and $y,s$
\caption{Combinatorial Potential Reduction Algorithm\label{potredalg}}
\end{algorithm}

\subsection{Analysis}
It is not hard to see that that the primal and dual steps  in the algorithm 
are in fact feasible moves. 
% The proof is delegated to the appendix.
\begin{lemma}\label{feasible}
 The new iterates $\bar x = X(\ones-\lambda \frac{\xh}{\max\{1,\|\xh\|_{\infty}\}})$,
 $\bar y = y + \mu \pi$ and $\bar s = X^{-1}(s'-\mu \sh)$ are feasible.
\end{lemma}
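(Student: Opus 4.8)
The plan is to verify, separately for the primal and the dual step, the two defining properties of a feasible primal--dual pair: preservation of the linear constraints ($A\bar x = b$ in a primal step, $A^T\bar y + \bar s = c$ in a dual step) and strict positivity ($\bar x > 0$ in a primal step, $\bar s > 0$ in a dual step). Since a primal step only changes $x$ while a dual step only changes $y,s$, the untouched iterates stay feasible trivially, so it suffices to treat the changed ones. Throughout I would exploit the scaling identities $X\ones = x$, $s' = Xs$, $\bar A = AX$, $g' = \frac{q}{\ones^Ts'}s' - \ones$, and the fact that the electrical-flow subroutine always returns $\f$ with $A\f = \chi = \bar A g'$ (the tree solution has this property, and pushing flow around a cycle $C_a$ does not change $A\f$).

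For the primal step I would write $\bar x = x - \lambda\, X\xh/\max\{1,\|\xh\|_\infty\}$ with $\xh = g' - X^{-1}\f$, so that $X\xh = Xg' - \f$ and
\[
  A(X\xh) = AXg' - A\f = \bar A g' - \chi = 0,
\]
hence $A\bar x = Ax = b$. Positivity is immediate: the vector $\xh/\max\{1,\|\xh\|_\infty\}$ has infinity norm at most $1$, so with $\lambda = 1/4$ every coordinate of $\ones - \lambda\,\xh/\max\{1,\|\xh\|_\infty\}$ lies in $[3/4,5/4]$, and since $X$ is a positive diagonal matrix, $\bar x \ge \tfrac34 x > 0$.

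For the dual step, from $\sh = \bar A^T\pi = XA^T\pi$ I get $X^{-1}\sh = A^T\pi$, hence $\bar s = X^{-1}(s' - \mu\sh) = s - \mu A^T\pi$ and
\[
  A^T\bar y + \bar s = A^Ty + \mu A^T\pi + s - \mu A^T\pi = A^Ty + s = c.
\]
For positivity it suffices to show $s' - \mu\sh > 0$, since $X^{-1}$ is a positive diagonal matrix. This is where the choice $\mu = \ones^Ts'/q$ and the dual-step guard $\|z'\|_2^2 < 1/4$ enter: from $z' = g' - \sh$, i.e.\ $\sh = \tfrac{q}{\ones^Ts'}s' - \ones - z'$, a one-line computation gives
\[
  s' - \mu\sh = s' - \frac{\ones^Ts'}{q}\Bigl(\tfrac{q}{\ones^Ts'}s' - \ones - z'\Bigr) = \frac{\ones^Ts'}{q}\,(\ones + z').
\]
Since $\|z'\|_\infty \le \|z'\|_2 < 1/2$, we have $\ones + z' \ge \tfrac12\ones > 0$, and with $\ones^Ts' = x^Ts > 0$ and $q > 0$ this yields $s' - \mu\sh > 0$, hence $\bar s > 0$.

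The only part that goes beyond scaling bookkeeping is the positivity of $\bar s$, and its crux is the identity $s' - \mu\sh = \frac{\ones^Ts'}{q}(\ones + z')$, which reduces the claim to the a priori bound $\|z'\|_2 < 1/2$ that triggers the dual step in the first place. I expect establishing this identity --- together with checking that $\ones^Ts'$ and $q$ are positive --- to be the main (and only mildly delicate) obstacle; everything else follows directly from the definitions of $\bar A$, $g'$, $\xh$, $\sh$ and the relation $A\f=\chi$.
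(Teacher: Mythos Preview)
Your proposal is correct and follows essentially the same approach as the paper: you verify $A(X\xh)=0$ via $\bar A g'-\chi=0$, get $\bar x'\ge \tfrac34$ from the $\infty$-norm normalization with $\lambda=1/4$, check $A^T\bar y+\bar s=c$ by canceling $\mu A^T\pi$, and derive the key identity $s'-\mu\sh=\frac{\ones^Ts'}{q}(\ones+z')$ to obtain $\bar s'>0$ from $\|z'\|_\infty\le\|z'\|_2<1/2$. The only minor difference is that the paper invokes $x^Ts\ge 1$ (the while-loop guard) to get the explicit lower bound $\bar s'_a\ge 1/(2q)$, whereas you use only $x^Ts>0$; either suffices for feasibility.
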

\begin{proof}
  Clearly, $A X \xh = A X (g' - X^{-1}f)= \chi - Af = 0$.
  Note that $X \bar x'>0$ if and only if $\bar x' >0$. 
 It holds that
 \begin{align*}
  \bar x'_a 
  =   1 - \lambda \frac{\xh_a}{\max\{1,\|\xh\|_{\infty}\}}
  \ge 1 - \lambda \frac{\|\xh\|_{\infty}}{\max\{1,\|\xh\|_{\infty}\}}
  \ge 1 - \lambda = 3/4
  \quad\text{for every }
  a\in A.
 \end{align*}
For the dual variables, we have $ A^T (y + \mu \p) + X^{-1}(s'-\mu \sh)
= A^T y + \mu A^T \p + s- \mu X^{-1} \bar A^T\p = c$.
In addition, we obtain
\begin{align*}
 \bar s'_a = s'_a - \mu \sh_a
  = x_a s_a - \frac{x^T s}{q} (g'_a - z'_a )
  = \frac{x^Ts}{q}( 1 + z'_a )
  \ge \frac{1}{2q}
  > 0,
\end{align*}
since $|z_a'|\le \|z'\|_{\infty} \le \|z'\|_2 \le 1/2$ and $x^Ts\ge 1$. 
\end{proof}

The following lemma shows that the potential is reduced by a constant
amount in each step. 
% The proof may be found in the appendix as well.
We remark that although the proof for the dual step is essentially similar to
the proof for Ye's algorithm, the normalization with the $\infty$-norm requires 
non-trivial changes in the proof for the primal step.
\begin{lemma}
  \label{constred}
 If $\delta\le 1/8$ and $p^2\ge m \ge 4$, 
 the potential reduction is constant in each step.
 % if $\delta\le 1/8$, then 
 \[
  \Pot(\ones,s')-\Pot(\ones-\lambda \frac{\xh}{\max\{1,\|\xh\|_{\infty}\}},s') \ge 1/64 
  \quad\text{ and } \quad
  \Pot(\ones,s')-\Pot(\ones,s' - \mu \sh) \ge \frac{1}{12}.
 \]
\end{lemma}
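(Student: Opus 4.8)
The plan is to handle the dual step first, since it is the more standard of the two, and then adapt the argument for the primal step where the $\infty$-norm normalization forces the deviations. For the dual step, I would write $\bar s' = s' - \mu\sh$ with $\mu = \ones^T s'/q$ and expand the potential difference as
\[
\Pot(\ones,s') - \Pot(\ones,\bar s')
= q\ln\frac{\ones^T s'}{\ones^T \bar s'} - \sum_{a\in A}\ln\frac{s'_a}{\bar s'_a}.
\]
The key observation, exactly as in Lemma~\ref{feasible}, is that $\bar s'_a/s'_a = (1+z'_a) \cdot(\text{common factor})$, so componentwise the multiplicative change is governed by $z' = g' - \sh$. I would use the standard inequality $-\ln(1+t) \le -t + \frac{t^2}{2(1-|t|)}$ valid for $|t| < 1$ (here $|z'_a| \le \|z'\|_2 \le 1/2$ from the \texttt{else}-branch condition, so the denominator is at least $1/2$), sum over $a$, and notice that $\ones^T z' = \ones^T g' - \ones^T\sh$; the linear terms should combine with the $q\ln(\cdot)$ term — after linearizing $\ln(1 - \mu\ones^T\sh/\ones^T s')$ — so that the first-order contributions cancel up to controllable error, leaving a negative second-order term $-\tfrac{q}{(\ones^T s')^2}\cdot(\text{stuff}) $ plus $+\|z'\|_2^2$-type terms. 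Plugging $\|z'\|_2^2 < 1/4$ and $q = m + p$ with $p^2 \ge m \ge 4$ should give the $\ge 1/12$ bound after arithmetic.

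For the primal step I would set $\bar x' = \ones - \lambda \xh/\max\{1,\|\xh\|_\infty\}$ with $\lambda = 1/4$, abbreviate $N := \max\{1,\|\xh\|_\infty\}$ and $v := \xh/N$ so that $\|v\|_\infty \le 1$ and each $\bar x'_a = 1 - \lambda v_a \ge 3/4$. Expanding,
\[
\Pot(\ones,s') - \Pot(\bar x',s')
= q\ln\frac{\ones^T s'}{\bar x'^T s'} - \sum_{a\in A}\ln \bar x'_a
= -q\ln\Bigl(1 - \tfrac{\lambda}{N}\tfrac{\xh^T s'}{\ones^T s'}\Bigr) - \sum_{a\in A}\ln(1 - \lambda v_a).
\]
For the sum I would use $-\ln(1-t) \ge t + t^2/2$ for $t \le 1/4$ (so $\sum -\ln(1-\lambda v_a) \ge \lambda \ones^T v + \tfrac{\lambda^2}{2}\|v\|_2^2$), and for the first term $-q\ln(1-t) \ge t$ for small $t$. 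The point is that the $\ones^T v$ and $q\cdot(\xh^T s'/\ones^T s')$ contributions have to be compared: here I would use that $\xh$ lies in $\ker \bar A$ (Lemma~\ref{feasible}), that $g' = \tfrac{q}{x^Ts}Xs - \ones$, and that $\xh = g' - X^{-1}f$, to rewrite $\xh^T s'$ and $\ones^T\xh$ in terms of $g'$ and conclude that the relevant descent quantity is controlled by $\|\xh\|_2^2$ — which in turn must be related back to $\|z'\|_2^2 \ge 1/4$ via the inequality promised in the text ("the 2-norm of $z' = g'-\sh$ does not differ too much from $\|d'\|_2$"). The extra factor $1/N$ costs a factor $\|\xh\|_\infty \le \|\xh\|_2$ in the worst case, but since the gain is quadratic in $\|\xh\|_2$ this is exactly why normalizing by the $\infty$-norm still yields a \emph{constant} (rather than $1/\|\xh\|_2$-scaled) reduction; the numerics with $\lambda = 1/4$ and $\delta \le 1/8$ should bottom out at $1/64$.

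The main obstacle I anticipate is the primal step, specifically relating $\|\xh\|_2$ (the actual cycle we push along, obtained from the \emph{approximate} electrical flow $f$) to the quantity $\|z'\|_2 \ge 1/2$ that triggered the branch, and simultaneously controlling the $1/\max\{1,\|\xh\|_\infty\}$ factor. Since $f$ is only an $\eps$-electrical flow and $\p$ only approximate voltages, I expect to need the certifying property to argue $\xh \approx d'$ and $\sh \approx g' - d'$ in $2$-norm up to $O(\sqrt{\eps})$ errors, then a triangle-inequality argument to move between $\|z'\|_2$, $\|d'\|_2$, and $\|\xh\|_2$, absorbing the $\eps$-errors into slack in the constants (this is presumably where $\eps^{-1} = O(m^2)$ enters). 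The dual step's error analysis is milder because the branch condition $\|z'\|_2 < 1/2$ already bounds everything directly, but I would still need to check that $\sh = \bar A^T\p$ being an approximate cut doesn't break the first-order cancellation — again handled by the certifying guarantee.
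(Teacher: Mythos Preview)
Your plan has the right overall shape but contains two genuine gaps, one in each step.

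\textbf{Primal step.} First, a sign slip: the potential difference is
\[
\Pot(\ones,s')-\Pot(\bar x',s')
= -q\ln\Bigl(1-\tfrac{\lambda}{N}\tfrac{\xh^Ts'}{\ones^Ts'}\Bigr)+\sum_{a}\ln(1-\lambda v_a),
\]
with a \emph{plus} on the sum, not a minus; so you need a \emph{lower} bound on $\ln(1-\lambda v_a)$, which contributes $-\lambda v_a-\lambda^2v_a^2/(2(1-\lambda))$, not $+\lambda^2v_a^2/2$. The linear parts then combine to $\lambda g'^Tv$, and the quadratic term is \emph{negative}, so to get constant decrease you must lower-bound $g'^T\xh$, not $\|\xh\|_2^2$.

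More seriously, your plan to control $g'^T\xh$ via certifying $\eps$-electrical-flow guarantees cannot work here: the lemma's only hypothesis about the inner solver is $\gap<\delta\le 1/8$, and $\eps$ does not appear. What the paper does instead is expand the definition of $\gap$ (recall $\xh=g'-X^{-1}f$, $\sh=\bar A^T\pi$, $z'=g'-\sh$) to obtain the exact identity
\[
\gap=\|g'-\xh\|_2^2-\|g'\|_2^2+\|z'\|_2^2=-2g'^T\xh+\|\xh\|_2^2+\|z'\|_2^2,
\]
which together with $\gap<1/8$ and $\|z'\|_2^2\ge 1/4$ gives $2g'^T\xh\ge\|\xh\|_2^2+1/8$ directly. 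This single identity replaces all of your triangle-inequality and $O(\sqrt\eps)$ reasoning; the case split on whether $\|\xh\|_\infty\le 1$ then finishes the arithmetic.

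\textbf{Dual step.} Your claimed identity $\bar s'_a/s'_a=(1+z'_a)\cdot(\text{common factor})$ is false: from Lemma~\ref{feasible} one has $\bar s'_a=\tfrac{\ones^Ts'}{q}(1+z'_a)$, so $\bar s'_a/s'_a=\tfrac{\ones^Ts'}{q\,s'_a}(1+z'_a)$ and the factor $1/s'_a$ is not common. Consequently the term $-\sum_a\ln s'_a$ survives and must be handled separately. The paper disposes of it via the arithmetic--geometric mean inequality, $-\sum_a\ln s'_a\ge -m\ln(\ones^Ts'/m)$; after that the remaining pieces reduce to $-p\ln(1+(\ones^Tz'-p)/q)-m\ln(1+\ones^Tz'/m)+\sum_a\ln(1+z'_a)$, and your logarithm bounds together with $\|z'\|_2<1/2$, $p^2\ge m\ge 4$ finish it. Without the AM--GM step your linearization does not close.
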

\begin{proof} 
\begin{enumerate}
\item
	We first show the estimate for the primal step. 
  Let $v$ be any vector with $\|v\|_{\infty} \le 1$, then
  \begin{align}
  \label{vbound}
  \begin{split}
    \Pot(\ones,s')-&\Pot(\ones-\lambda v,s') 
    =    - q \ln \Big(1 - \lambda \frac{v^T s'}{\ones^T s'}\Big) 
          + \sum_{a \in A} \ln\big(1 - \lambda v_a\big)\\
    &\ge q \lambda \frac{v^T s'}{\ones^T s'} 
        - \lambda \sum_{a\in A} v_a 
        - \frac{\lambda^2}{2(1-\lambda)}\sum_{a\in A} v_a^2
    =    \lambda g'^T v 
        - \frac{\lambda^2\| v \|_2^2}{2(1-\lambda)},
  \end{split}
  \end{align}
  where the inequality follows because $\ln(1+\gamma)\ge \gamma - |\gamma|^2/(2(1 - |\gamma|))$ for
  any $\gamma\in (-1,1)$.
  The variable $\gap$ from Algorithm~\ref{potredalg} can be written as
  \begin{equation}
  \begin{aligned}%\begin{split}
  \label{gap}
  	\gap
  	:&= 
  	\f^TR\f - 2\p^T\chi + \p^T\bar A \bar A^T \p
  	=
  	\| g' - \xh\|_2^2 - 2 g'^T (g' - z') + \|g' - z'\|_2^2\\
  	% &=
  	% \| g' - \xh\|_2^2 - \| g'\|_2^2  + \| g' - (g' - z')\|_2^2
  	&= 
  	\| g' - \xh\|_2^2 - \| g'\|_2^2  + \| z' \|_2^2
  	=
  	-2g'^T\xh + \|\xh\|_2^2 + \|z'\|_2^2,
  % \end{split}
  \end{aligned}
  \end{equation}
  which for the primal step, where $\|z'\|_2^2\ge 1/4$, yields the estimate
  \begin{align}
  \label{gtd}
  	2 g'^T\xh 
  	= 
  	\|\xh\|_2^2 + \|z'\|_2^2 - \gap
  	\ge
  	\|\xh\|_2^2 + 1/8,
  	\quad \text{ since }
  	\gap < \delta\le 1/8.
  \end{align}
  % when we run Algorithm~\ref{potredalg} with $\delta\le 1/8$.
  \begin{description}
    \item[Case $\max\{1,\|\xh\|_{\infty}\} = 1$:]
      Then, from \eqref{vbound} with $v=\xh$ and \eqref{gtd} we obtain
      \begin{align*}
        \Pot(\ones,s')-\Pot(\ones-\lambda \xh,s') 
        &\ge
        \frac{\lambda( \|\xh\|_2^2 + 1/8 )}{2} - \frac{\lambda^2}{2(1-\lambda)}
              \|\xh\|_2^2\\
        &=
        \frac{1}{2}
        \Big[ \Big(\lambda - \frac{\lambda^2}{1 - \lambda}\Big)\|\xh\|_2^2 + \frac{\lambda}{8}\Big]
        \ge \frac{1}{64} \quad \text{ for } \lambda = 1/4.
      \end{align*}
  \item[Case $\max\{1,\|\xh\|_{\infty}\} = \|\xh\|_{\infty}$:]
    We use \eqref{vbound} with $v=\xh/ \|\xh\|_{\infty}$ and \eqref{gtd}. Then 
    we conclude
    \begin{align*}
      \Pot(\ones,s')-\Pot(\ones-\frac{\lambda \xh}{\|\xh\|_{\infty}},s')
      &\ge
      \frac{\lambda( \|\xh\|_2^2 + 1/8 )}
            {2 \|\xh\|_{\infty}} - \frac{\lambda^2}{2(1-\lambda)}
            \frac{\|\xh\|_2^2}{\|\xh\|_{\infty}^2}\\
      &\ge
      \frac{1}{2}
      \Big[\frac{ \|\xh\|_2^2 + 1/8 }
            {\|\xh\|_2^2} \lambda - \frac{\lambda^2}{1-\lambda}\Big]
            \frac{\|\xh\|_2^2}{\|\xh\|_{\infty}^2}
      \ge
      \frac{1}{12}  \hfill \text{ for }\hfill \lambda = \frac{1}{4}. 
    \end{align*}

  \end{description}
  \item
    For the dual step, observe that $\bar s' := s' - \mu \sh = \frac{\ones^Ts'}{q} (\ones + z')$. We obtain
\begin{align*}
 \Pot(\ones,s')-\Pot(\ones, \bar s')
 &=   - q \ln \Big(\frac{m + \ones^T z'}{q}\Big)   
      - \sum_{a\in A} \ln s'_a + \sum_{a\in A} \ln \bar s'_a \\
 &\ge -  q \ln \Big(\frac{m + \ones^T z'}{q}\Big) 
      - m\ln \Big( \frac{\ones^T s'}{m} \Big)
      + \sum_{a\in A} \ln \Big(\frac{\ones^T s'}{q} (1 + z_a')\Big) \\
 &= - p \ln \Big(1+\frac{\ones^T z' - p}{q}\Big)   
      - m \ln \Big(1 + \frac{\ones^Tz'}{m} \Big)
      + \sum_{a\in A} \ln (1 + z_a').
\end{align*}
In the dual step we have $\|z'\|_2^2\le 1/4$ and therefore $\|z'\|_{\infty}<1$.
Using this, we obtain
\begin{align*}
\Pot(\ones,s')-\Pot(\ones,\bar s')
&\ge \frac{p^2-p\ones^T z'}{q}
      - \ones^T z'
      + \ones^Tz' - \frac{\|z'\|_2^2}{2(1-\|z'\|_{\infty})} \\
&\ge \frac{p^2-p\sqrt{m}\|z'\|_2}{p+m} 
      - \frac{\|z'\|_2^{2}}{2(1-\|z'\|_2)} \\
&\ge \frac{p^2 - p^2/2}{p+p^2} - \frac{1/4}{2(1-1/2)} 
= \frac{p}{2(p+1)} - \frac{1}{4} \ge \frac{1}{12}
\end{align*}
using $p^2\ge m \ge 4$ and $\|z'\|_2<1/2$. \qedhere
 \end{enumerate}
 \end{proof}

We already remarked that $\Pot(x,s)<0$ implies $x^Ts<1$, hence
$\Pot(x,s)\ge 0$ holds throughout the algorithm.
With Lemma~\ref{constred}, the initial potential bounds the number of iterations.

\begin{theorem}
\label{runtimepseudocode}
  Given primal and dual interior points with potential $P_0$ as input, 
  Algorithm~\ref{potredalg} outputs interior primal and dual solutions $x$ and $y,s$ 
  with $x^Ts<1$ after $O(P_0)$ iterations.
  It can be implemented such that it terminates after 
  \[
    O( P_0 \cdot m \log^3(m) \log \log m))
  \]
  time with probability at least $1-\exp(- m \log^3 (m) \log \log m)$.
\end{theorem}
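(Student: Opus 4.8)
The plan is to combine the constant potential drop per step established in Lemma~\ref{constred} with a bound on the cost of one iteration of the outer \texttt{while}-loop, whose dominant part is the inner loop computing the certifying $\eps$-electrical flow.

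First I would pin down the free parameters: take $\delta:=1/8$ and $q:=m+p$ with $p:=\lceil\sqrt m\rceil$, so $p^2\ge m$ (the finitely many cases with $m<4$ are handled directly). Then Lemma~\ref{constred} applies, so every primal step lowers $\Pot$ by at least $1/64$ and every dual step by at least $1/12$, hence by at least $1/64$ in both cases, while Lemma~\ref{feasible} keeps the iterates strictly feasible so that the algorithm is well defined. By~\eqref{equipot}, $\Pot(x,s)<0$ forces $x^Ts<1$, so $\Pot\ge 0$ holds as long as the loop has not stopped; starting from $\Pot=P_0$ it therefore terminates after at most $64P_0=O(P_0)$ iterations with a pair satisfying $x^Ts<1$. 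This proves the first claim.

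For one iteration outside the inner loop, forming $g'$, $\chi=\bar A g'$ and $r$, the tree flow $f$ with $Af=\chi$ and its tree-induced voltages $\pi$, the initial value of $\gap$, and finally $\xh$, $\sh$, $z'$, $\|z'\|_2^2$ and the primal or dual update all cost $O(m)$; a low-stretch spanning tree $T$ for $r$, together with $\tau(T)$ and the probabilities $p_a$, costs $O(m\log n\log\log n)$ by Abraham and Neiman~\cite{DBLP:conf/stoc/AbrahamN12} and gives $\tau(T)=O(m\log n\log\log n)$. The key point for the inner loop is that its electrical-flow instance is \emph{benign}: since $R=X^{-2}$ and $f:=Xg'$ is feasible ($Af=\bar A g'=\chi$), the optimal energy satisfies $\|f^*\|_R^2\le\|Xg'\|_R^2=\|g'\|_2^2\le m\|g'\|_\infty^2=O(mq^2)=\poly(m)$, a bound independent of $\gamma$, of the current duality gap, and of the sizes of the $x_a$ (which influence only $\tau(T)$). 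With the standard estimate $\gap_0=O(\tau(T)\,\|f^*\|_R^2)$ for the initial tree gap, we get $\gap_0/\delta=\poly(m)$, so it is enough to drive the relative error of the flow and of the certifying voltages down to $\eps=\Theta(\delta/\|f^*\|_R^2)=1/\poly(m)$, for which $\log(n/\eps)=O(\log n)$.

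Now for the inner loop. Following Kelner et al.~\cite{DBLP:conf/stoc/KelnerOSZ13}, a single cycle-update step multiplies the expected gap by $(1-1/\tau(T))$, and the primal-energy part of the gap is monotone non-increasing. Grouping $\Theta(\tau(T))$ consecutive steps into an \emph{epoch}, Markov's inequality shows each epoch at least halves the current gap with conditional probability at least $1/2$, so by the Markov property the epoch outcomes dominate independent fair coins; since only $\lceil\log_2(\gap_0/\delta)\rceil=O(\log m)$ halvings are needed, a Chernoff bound over the epochs shows that after $O(\tau(T)\log m)=O(m\log^2 m\log\log m)$ steps the gap is below $\delta$, except with the small probability asserted in the theorem (sharpened as desired by running more epochs). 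Each step costs $O(\log n)$ with the flow-update tree structure of~\cite[Section~5]{DBLP:conf/stoc/KelnerOSZ13}, and $\gap$ is recomputed only every $m$ steps ($O(1)$ amortized), so one run of the inner loop costs $O(m\log^3 n\log\log n)$. Hence one outer iteration costs $O(m\log^3 n\log\log n)$ with high probability, there are $O(P_0)$ of them, and $\log n=\Theta(\log m)$ on a connected graph; a union bound over the $O(P_0)$ electrical-flow computations yields the stated total time and probability. The main obstacle I foresee is exactly this last step --- boosting the in-expectation bound for the electrical-flow solver to a high-probability bound without inflating the running time --- and it hinges on the $\gamma$-independent bound $\|f^*\|_R^2=\poly(m)$ (keeping the number of ``halving'' epochs at $O(\log m)$) together with the monotonicity of the energy gap.
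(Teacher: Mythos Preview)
Your outer-loop analysis (setting $\delta=1/8$, $p=\lceil\sqrt m\rceil$, invoking Lemma~\ref{constred} and~\eqref{equipot} to get $O(P_0)$ iterations) is correct and matches the paper. The divergence, and the gap, is in the tail bound for the inner loop.

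Your epoch argument yields the right step count but not the stated failure probability. With epochs of size $\Theta(\tau(T))$ and $K=O(\log m)$ of them, Chernoff gives failure probability $\exp(-\Theta(K))=m^{-\Theta(1)}$, not $\exp(-m\log^3 m\log\log m)$. The parenthetical ``sharpened as desired by running more epochs'' does not rescue this: reaching an exponent of order $m\log^3 m\log\log m$ via Chernoff would require $\Theta(m\log^3 m\log\log m)$ epochs, hence $\tilde O(m^2)$ inner steps per outer iteration, which overruns the time bound you are trying to prove. A union bound over $O(P_0)$ outer iterations only makes this worse.

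The paper obtains the strong bound by a different mechanism. Rather than epochs plus Chernoff, it bounds $\Pr[X>i]$ by a \emph{product} of per-step Markov estimates, arriving at
\[
\Pr[X>i]\ \le\ \Bigl(\tfrac{\tau}{\delta}\,(\st(T)-1)\,\|g'-d'\|_2^2\Bigr)^{i}\Bigl(1-\tfrac{1}{\tau}\Bigr)^{i(i+1)/2}.
\]
The key is the quadratic exponent $i(i+1)/(2\tau)$: with $i=\Theta(\tau\log m)=\Theta(m\log^2 m\log\log m)$ the second factor is $\exp(-\Theta(\tau\log^2 m))=\exp(-\Theta(m\log^3 m\log\log m))$, and it dominates the first factor precisely because, as you correctly observed, $\|g'-d'\|_2^2\le\|g'\|_2^2=O(mq^2)=\poly(m)$, so the base of the first factor is only polynomial in $m$. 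An epoch/Chernoff bound is inherently linear in the number of epochs and cannot produce this quadratic exponent within the same step budget; that is the missing idea in your proposal.
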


\begin{proof}
	\providecommand{\fopt}{{\f^{*}}}
	Kelner et al.\ give the following convergence result~\cite[Theorem 4.1]{DBLP:conf/stoc/KelnerOSZ13}
	\begin{align}
	\label{expectbound}
		\Exp[f_j^TRf_j -\fopt^TR\fopt] \le \Big(1-\frac{1}{\tau}\Big)^j (f_0^TRf_0 - \fopt^T R \fopt),
	\end{align}
	here $\fopt$ denotes an optimal electrical flow and $f_j$ the flow computed in the 
	$j$'th iteration. Let $\gap_j$ denote the value of $\gap$
	in the $j$'th iteration and let $X$ denote a random variable counting 
	the number of iterations of the Repeat-Until loop in 
	Algorithm~\ref{potredalg}. It follows that
	\begin{align*}
		\Pr\big[ X > i \big]
		% &=
		% \prod_{j=1}^i \Pr\big[ \gap_j \ge \delta \big]
		=
		\prod_{j=1}^i \Pr\big[ f_j^TR f_j - 2 \p_j^T\chi + \p_j^T \bar A \bar A^T \p_j \ge \delta \big] 
		\le 
		\prod_{j=1}^i \Pr\big[ f_j^TR f_j - \fopt^TR\fopt \ge \frac{\delta}{\tau} \big],
	\end{align*}
	since $f_j^T R f_j - \fopt^T R \fopt \ge \gap_j/\tau$, 
	see~\cite[Lem. 6.2]{DBLP:conf/stoc/KelnerOSZ13}. Using Markov's bound, 
	equation~\eqref{expectbound} and the 
	bound on the initial energy $f_0^T R f_0 \le \st(T) \fopt^T R \fopt$, 
	see~\cite[Lem. 6.1]{DBLP:conf/stoc/KelnerOSZ13}, yields
	\begin{align*}
		\Pr\big[ X > i \big]
		&\le 
		\prod_{j=1}^i \frac{\Exp[f_j^TR f_j - \fopt^TR\fopt ]}{\delta/\tau}
		\le
		% \Big(\frac{\tau}{\delta} 
		% (f_0^TRf_0 - \fopt^T R \fopt) \Big)  ^ i 
		% \prod_{j=1}^i \Big(1-\frac{1}{\tau}\Big)^j\\
		% &=
		\Big(\frac{\tau}{\delta} 
		(f_0^TRf_0 - \fopt^T R \fopt) \Big)  ^ i 
		\Big(1-\frac{1}{\tau}\Big)^{\frac{i(i+1)}{2}}\\
		&\le
		\Big(\frac{\tau}{\delta} 
		(\st(T) -1) \|g'-d'\|_2^2 \Big)^ i 
		\Big(1-\frac{1}{\tau}\Big)^{\frac{i(i+1)}{2}}
		% \le
		% \Big(C_1 m^5 \log(m) \log \log m 
		% \Big)  ^ i 
		% \Big(1-\frac{1}{C_2 m}\Big)^{C_3i^2},
		= \exp(-m \log^3(m) \log\log m)
	\end{align*}
	with $i=O(m \log^2(m) \log \log m)$, and the guarantee on the low-stretch spanning
	tree of Abraham and Neiman~\cite{DBLP:conf/stoc/AbrahamN12}, which yields
	$\tau = O(\st(T)) = O(m\log(m)\log\log m)$. Hence, the 
	number of times the Repeat-Until loop is executed during one of the $O(P_0)$
	iterations is bounded by $O(m \log^2(m) \log \log m)$ with exponentially high
	probability. We remark that the updates of the flow $f$ should not be done in the 
	naive way but using a simple data structure
	exactly as it is also described in Kelner et al~\cite{DBLP:conf/stoc/KelnerOSZ13} for 
	their \texttt{SimpleSolver}. One iteration takes $O(\log n)$ time then. 
	We can compute $\gap$ in every $m$'th iteration in $O(m)$
	time, which yields that we make at most $m$ steps to much and need amortized constant
	run-time for the update of $\gap$ in each iteration. This yields the bound.
\end{proof}

% The theorem follows by an application of Markov's inequality to the probability
% that the gap is larger than $\delta$ after one iteration of the Repeat-Until loop.
We remark that we can also keep running the algorithm until $x^Ts<c$
for any $c\in\RRnneg$ without affecting the running time.
In addition, we get the following more general result. To prove it, it
remains to show that a $1/(16q^2)$-electrical flow fulfills $\gap\le 1/8$.
% This proof may be found in the appendix as well.
\begin{theorem}
\label{runtimegeneral}
  Given primal and dual interior points with a potential of $P_0$ as input, 
  there is a combinatorial algorithm that outputs interior primal and dual 
  solutions $x$ and $y,s$ with $x^Ts<1$ and needs
  \[
    O( P_0 \cdot \CEF(n,m,1/(16q^2)))
  \]
  time, where $q=m+\min\{k\in \ZZ : k^2\ge m\}$ and $\CEF(n,m,\eps)$ 
  is the running time of a certifying $\eps$-electrical flow algorithm.
\end{theorem}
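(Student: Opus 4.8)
The plan is to reduce Theorem~\ref{runtimegeneral} to Theorem~\ref{runtimepseudocode} by showing that the only place where the specific \texttt{SimpleSolver}-based subroutine was used is in guaranteeing $\gap \le \delta \le 1/8$, and that any certifying $\eps$-electrical flow algorithm with $\eps = 1/(16q^2)$ produces a flow $f$ and voltages $\p$ with exactly this property. Concretely, I would first recall that throughout the algorithm the potential stays nonnegative, so by~\eqref{equipot} we have $p\ln(x^Ts)\le \Pot(x,s)$; but actually what I need is a lower bound on $x^Ts$, which is trivially $x^Ts\ge 1$ inside the while-loop, together with an upper bound. The upper bound comes from the initial potential: since $\Pot$ only decreases, $p\ln(x^Ts)\le \Pot(\ones,s')\le P_0$, hence $x^Ts\le \exp(P_0/p)$. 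This does not immediately bound $q$, but $q=m+p$ with $p=\min\{k\in\ZZ:k^2\ge m\}=O(\sqrt m)$ is fixed once $m$ is fixed, so $q=O(m)$; this is the choice stated in the theorem and it is what makes $1/(16q^2)=\Omega(1/m^2)$ polynomially small, matching the claim $\eps^{-1}=O(m^2)$ from the introduction.

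Next, the core estimate: I would show $\gap \le 16 q^2 \cdot \eps \cdot \|\f^*\|_R^2 \cdot (\text{something}) $— more carefully, using the definition of a certifying $\eps$-electrical flow and the gap expression~\eqref{gap}. Recall $\gap = \f^TR\f - 2\p^T\chi + \p^T\bar A\bar A^T\p$, and for the true optimum $\f^*,\p^*$ this quantity is zero (primal and dual optima coincide for electrical flow, as noted after the definition: $AR^{-1}A^T\p^*=\chi$ and by LP duality the values match). Writing $\gap = (\f^TR\f - \|\f^*\|_R^2) + (\|\p^*\|_{AR^{-1}A^T}^2 - 2\p^T\chi + \p^T\bar A\bar A^T\p)$ and using $\|\p^*\|_{AR^{-1}A^T}^2 = (\p^*)^TAR^{-1}A^T\p^* = (\p^*)^T\chi$ (again since $AR^{-1}A^T\p^*=\chi$), the first parenthesis is $\le \eps\|\f^*\|_R^2$ by the $\eps$-electrical flow guarantee, and the second is $\le \eps\|\p^*\|_{AR^{-1}A^T}^2 = \eps\|\f^*\|_R^2$ by the certifying property (a short computation shows $\|\p-\p^*\|_{AR^{-1}A^T}^2 = \p^TAR^{-1}A^T\p - 2\p^T\chi + (\p^*)^T\chi$ using $AR^{-1}A^T\p^*=\chi$, and note $\bar A\bar A^T = AX^{-2}A^T\cdot(\text{wait})$— here I must check that $\bar A\bar A^T = AX X X^{-2}? $ no: $\bar A = AX$ but $R=X^{-2}$, so $\bar A\bar A^T = AX^2A^T$ whereas $AR^{-1}A^T = AX^2A^T$; these agree, good). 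Hence $\gap \le 2\eps\|\f^*\|_R^2$.

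It remains to bound $\|\f^*\|_R^2$ by something like $8q^2$ so that $\gap \le 16 q^2\eps \le 16 q^2/(16q^2) = 1\cdot$, wait I want $\le 1/8$, so I need $\|\f^*\|_R^2 \le 1/(32q^2)\cdot(1/8)^{-1}$— let me instead target $\|\f^*\|_R^2 = O(1)$ and then $\eps$ small enough. Actually the cleanest route: $\|\f^*\|_R^2 \le \|\f^{(0)}\|_R^2$ for the feasible flow $\f = X(g'-d')$ where $d'$ is the \emph{exact} projection of $g'$ onto $\ker\bar A$; this is the electrical flow energy $\|g'-d'\|_2^2 = \mathrm{dist}(g',\ker\bar A)^2 \le \|g'\|_2^2$. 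So I need $\|g'\|_2^2 = O(q^2)$— but $g' = \frac{q}{x^Ts}Xs - \ones = \frac{q}{\ones^Ts'}s' - \ones$, hence $\|g'\|_2^2 = \sum_a (q s'_a/(\ones^Ts') - 1)^2$. Since the algorithm maintains (by the feasibility lemma and an invariant worth stating) that the $s'_a$ are not too skewed — actually one can only say $\sum_a s'_a = \ones^Ts' = x^Ts \ge 1$ and each $s'_a>0$, giving $\|g'\|_2^2 \le q^2\sum_a (s'_a/\ones^Ts')^2 + m \le q^2 + m \le 2q^2$, using $\sum_a(s'_a/\ones^Ts')^2 \le (\sum_a s'_a/\ones^Ts')^2 = 1$ and $m\le p^2\le q^2$. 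Therefore $\gap \le 2\eps\cdot 2q^2 = 4q^2\eps$, and choosing $\eps = 1/(16q^2)$ gives $\gap \le 1/4$.

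\begin{proof}[Proof plan for Theorem~\ref{runtimegeneral}]
The argument is that Algorithm~\ref{potredalg}, with its inner \texttt{SimpleSolver} loop replaced by one call to an arbitrary certifying $\eps$-electrical flow algorithm with $\eps = 1/(16q^2)$, still satisfies the hypothesis $\delta \le 1/8$ of Lemma~\ref{constred}, and hence by the (unchanged) analysis reduces the potential by a constant per step and terminates in $O(P_0)$ iterations, each costing $\CEF(n,m,1/(16q^2))$ plus $O(m)$ overhead.

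\emph{Step 1.} Note $q = m+p$ is a fixed rational with $p = \min\{k\in\ZZ:k^2\ge m\} = O(\sqrt m)$, so $q = O(m)$ and $1/(16q^2) = \Omega(1/m^2)$ is polynomially bounded below.

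\emph{Step 2 (gap is small).} Let $\f,\p$ be the output of the certifying $\eps$-electrical flow algorithm on the instance with resistances $R = X^{-2}$ and sources $\chi = \bar A g'$, and let $\f^*,\p^*$ be the exact optimizers. Using $AR^{-1}A^T\p^* = \chi$, $\bar A\bar A^T = AX^2A^T = AR^{-1}A^T$, and the identity $\|\p-\p^*\|_{AR^{-1}A^T}^2 = \p^T\bar A\bar A^T\p - 2\p^T\chi + (\p^*)^T\chi$, one checks that
\[
\gap = \f^TR\f - 2\p^T\chi + \p^T\bar A\bar A^T\p = \big(\f^TR\f - \|\f^*\|_R^2\big) + \|\p - \p^*\|_{AR^{-1}A^T}^2 \le 2\eps\,\|\f^*\|_R^2,
\]
where the last inequality uses the $\eps$-electrical flow bound for the first term and the certifying bound together with $\|\p^*\|_{AR^{-1}A^T}^2 = \|\f^*\|_R^2$ for the second.

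\emph{Step 3 (bounding the optimal energy).} Since $\f = X(g'-d')$ is feasible for the energy-minimization problem~\eqref{elflow}, $\|\f^*\|_R^2 \le \|g'-d'\|_2^2 \le \|g'\|_2^2$. With $g' = \frac{q}{\ones^Ts'}s' - \ones$, $s' > 0$ and $\ones^Ts' = x^Ts \ge 1$, we get $\|g'\|_2^2 \le q^2\sum_a (s'_a/\ones^Ts')^2 + m \le q^2 + m \le 2q^2$, using $\sum_a (s'_a/\ones^Ts')^2 \le 1$ and $m \le p^2 \le q^2$. Hence $\gap \le 4q^2\eps$.

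\emph{Step 4 (conclusion).} Choosing $\eps = 1/(16q^2)$ gives $\gap \le 1/4 \le \tfrac18$, wait — I need $\le 1/8$; re-examining, with $\|g'\|_2^2\le 2q^2$ we get $\gap\le 4q^2\eps = 1/4$, which does not suffice. A slightly sharper estimate of $\|g'\|_2^2$, or equivalently taking $\eps = 1/(32q^2)$ with a cosmetic adjustment of constants, restores $\gap \le 1/8$; the statement is taken with the constant as given. With $\delta := \gap \le 1/8$ and $p^2 \ge m \ge 4$, Lemma~\ref{constred} applies, so every iteration decreases $\Pot$ by at least $1/64$. Since $\Pot \ge 0$ is maintained (as $\Pot < 0$ would already give $x^Ts<1$), after $O(P_0)$ iterations the loop terminates with $x^Ts < 1$. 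Each iteration performs one $\eps$-electrical flow computation plus $O(m)$ arithmetic for forming $g',\chi,\xh,\sh,z'$ and updating, i.e.\ $O(\CEF(n,m,1/(16q^2)) + m)$; absorbing the $O(m)$ term (which is dominated by any reasonable $\CEF$) yields the claimed $O(P_0 \cdot \CEF(n,m,1/(16q^2)))$ bound.
\end{proof}

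The main obstacle, and the step I would double-check most carefully, is the constant bookkeeping in Steps 2--3: precisely how the certifying guarantee on $\|\p-\p^*\|_{AR^{-1}A^T}^2$ converts into a bound on the dual slack of the electrical flow LP, and pinning down whether $\|g'\|_2^2 \le 2q^2$ (or a marginally better bound) combined with $\eps = 1/(16q^2)$ genuinely yields $\gap \le 1/8$ rather than merely $O(1)$. If the constant comes out as $1/4$ rather than $1/8$, the honest fix is either to sharpen $\|g'\|_2^2$ slightly (e.g.\ exploiting that at most a constant fraction of coordinates of $g'$ can be large, which follows from the potential staying bounded) or to state the theorem with $\eps = 1/(cq^2)$ for the appropriate constant $c$; either way the asymptotics $\eps^{-1} = O(m^2)$ and the overall running time are unaffected.
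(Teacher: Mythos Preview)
Your approach is essentially the same as the paper's: reduce to Lemma~\ref{constred} by showing that a certifying $\eps$-electrical flow with $\eps=1/(16q^2)$ yields $\gap\le 1/8$, via the decomposition $\gap = (\f^TR\f - \|\f^*\|_R^2) + \|\p-\p^*\|_{AR^{-1}A^T}^2 \le 2\eps\|\f^*\|_R^2 \le 2\eps\|g'\|_2^2$. The paper writes this using the form~\eqref{gap} and Pythagoras ($\|g'\|_2^2 = \|g'-d'\|_2^2 + \|d'\|_2^2$), but unwinding the identities shows it is literally your decomposition.

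The only genuine slip is the constant in your Step~3. You wrote $\|g'\|_2^2 \le q^2\sum_a (s'_a/\ones^Ts')^2 + m$, tacitly applying a triangle-inequality bound and dropping the cross term. But the \emph{exact} expansion is
\[
\|g'\|_2^2 \;=\; \sum_a\Bigl(\tfrac{q\,s'_a}{\ones^Ts'}-1\Bigr)^2
\;=\; q^2\sum_a \Bigl(\tfrac{s'_a}{\ones^Ts'}\Bigr)^2 - 2q + m,
\]
and the cross term $-2q$ is negative, so it \emph{helps}. Since $\sum_a(s'_a/\ones^Ts')^2 \le 1$ and $-2q+m = -(m+2p) < 0$, you get $\|g'\|_2^2 \le q^2$ on the nose, hence $\gap \le 2\eps q^2 = 2\cdot\tfrac{1}{16q^2}\cdot q^2 = \tfrac18$ exactly. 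No sharper invariant on the $s'_a$ and no change of constant is needed; this is precisely the bound the paper uses.
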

\begin{proof}
	It remains to show that a $1/(16q^2)$-electrical flow fulfills $\gap\le 1/8$.
  The approximation guarantee from the certifying $\eps$-electrical flow 
  algorithm for the primal and dual solution yield
  \begin{align}
  \label{primgar}
  \begin{split}
    \|g'-\xh \|_2^2 \le (1+\eps)\|g'-d'\|_2^2
    \hspace{3mm} \text{and} \hspace{3mm} 
  	\| \p - \p^*\|_{\bar A \bar A^T}^2
    \le \eps \|\p^*\|_{\bar A \bar A^T}^2,
  \end{split}
  \end{align}
  the second guarantee equivalently writes as 
  \begin{align*}
  	\eps \|g'-d'\|_2^2
  	\ge 
  	\|z'-d'\|_2^2 
  	=\|z'\|_2^2 -2d'^T(g'-\bar A^T\p) + \|d'\|_2^2
  	=\|z'\|_2^2 - \|d'\|_2^2.
  \end{align*}
  Together with~\eqref{gap} and~\eqref{primgar}, we obtain
  \begin{align*}
  	\gap
  	&=
  	\|g'-\xh \|_2^2 - \|g'\|_2^2 + \|z'\|_2^2
  	\le
  	(1+\eps) \| g'- d' \|_2^2 - \| g' \|_2^2 + \eps \| g'- d' \|_2^2 
  		+  \|d'\|_2^2\\
  	&\le
  	2\eps \| g'- d' \|_2^2 \le 2\eps \| g' \|_2^2 \le 2 \eps q^2 \le \frac{1}{8}.\qedhere
  \end{align*}
\end{proof}

\section{Initialization}\label{sec:initial}
In this section, we describe how to find initial points with $P_0=\tilde O(\sqrt m)$
that we can use to initialize Algorithm~\ref{potredalg}.
We assume w.l.o.g. that the given min-cost flow 
instance is 
% feasible and 
finite, that the capacities are finite and that the costs
are non-negative. In order to be self-contained, we also justify these assumptions. 

We first describe how one recognizes unbounded instances. 
Consider the graph $G_{\infty}=(V,A_{\infty})$, where 
$A_{\infty}:=\{a\in A_0 : u_a =\infty\}$ denotes the set of arcs with infinite 
capacity.
By running a shortest path algorithm for graphs with possibly negative arc length, 
as for example the one presented by Goldberg in~\cite{DBLP:conf/soda/Goldberg93}, 
we can detect whether $G_{\infty}$ contains a negative cycle in $O(\sqrt n m \log C)$ 
time. If there is such a uncapacitated negative cycle, the problem is unbounded and the 
solution is $-\infty$, otherwise the solution is finite.
Now, since we know that the problem is finite, provided that it is feasible as well, there will always be an optimal basic solution. Hence, the maximum flow on any arc in this solution will be bounded by 
$\|b\|_1/2$.
Hence, we set the capacity of every uncapacitated arc to $u_a = \|b\|_1/2$.
There is also a well-known technique to remove the negative costs:
Saturate the arcs with negative cost and consider the residual network,
this gives an equivalent problem with $c\ge 0$. Note that the increase in 
$\|b\|_1$ due to this construction is only polynomial.
We remark that we do not need to check feasibility,
since the crossover procedure presented above enables us to recognize 
infeasibility. This is described at the end of this section.

\subsection{Removing Capacity Constraints}
Using a standard reduction, we modify the network in order to get rid of the upper 
bound constraints $x\le u$. We briefly review the construction since we will later
extend it to obtain the auxiliary network flow problem with interior primal and dual 
points of low potential. Let $G_0=(V_0,A_0)$ denote the original input graph. For an 
edge $a=(v,w)\in A_0$, we proceed as follows, see from left to middle in 
Figure~\ref{fig:remcapfig}: Remove $a$, insert a node $vw$, insert arcs 
$\acute{a}=(v,vw)$ and $\grave{a}=(w,vw)$ with $c_{\acute{a}} = c_a$ and 
$c_{\grave{a}} =0$, respectively.\footnote{
	The accents reflect the direction in which the
	arc is drawn in Figure~\ref{fig:remcapfig}.
} 
Moreover, set $b_{vw} = u_a$ and subtract $u_a$ from $b_w$. 

\begin{figure}[ht!]
	\begin{center}
		\resizebox{!}{3.4cm}{
			\begin{tikzpicture}[shorten >=1pt,node distance=2cm,>=stealth',initial/.style={}]
\begin{scope}
\tikzstyle{every state}=[draw=blue!50,very thick,fill=blue!20]
  \node[state, color = white]			(vw) 			{\large{$vw$}};
  \node[state]          	(w) [above left =of vw]	{\large{$w$}};
  \node[state]          	(v) [below left =of vw]	{\large{$v$}};
\tikzset{mystyle/.style={orange}} 
% %\path (v)     edge [mystyle]    node   {$3$} (w);

\path[->]	(v) edge 		node[left] {\large{$(u_a,c_a)$}} (w);
\path[->]		(w) edge[mystyle] 	node[above] {\large{$b_w$}}  ++ (-1.5,0);
\path[->]		(v) edge[mystyle] 	node[above] {\large{$b_v$}}  ++ (-1.5,0);
%\path[<-]		(vw) edge[mystyle,color=white] node[above] {}  ++ (3,0);
\end{scope}

\begin{scope}[xshift=4cm]
\tikzstyle{every state}=[draw=blue!50,very thick,fill=blue!20]
  \node[state]			(vw) 			{\large{$vw$}};
  \node[state,color=white]	(a) [left =of vw]	{};
  \node[state]          	(w) [above left =of vw]	{\large{$w$}};
  \node[state]          	(v) [below left =of vw]	{\large{$v$}};
\tikzset{mystyle/.style={orange}} 
%\path (v)     edge [mystyle]    node   {$3$} (w);
\path[->]	(v) edge 		node[below right] {\large{$(\infty,c_a)$}} (vw);
\path[->]	(w) edge 		node[above right] {\large{$(\infty,0)$}} (vw);
\path[->]		(v) edge[mystyle] 	node[above] {\large{$b_v$}}  ++ (-1.9,0);
\path[->]		(w) edge[mystyle] 	node[above] {}  ++ (-1.9,0);
\node[color=orange] (d) at (-3.17,2.4) {\large{$b_w - u_a$}};
%\path[->]		(vw) edge[mystyle,color=white] 	node[above] {$u_{a}$}  ++ (3,0);
\path[->]		(vw) edge[mystyle] 	node[above] {\large{$u_{a}$}}  ++ (1.7,0);
%\path[->]		(a) edge[mystyle,color=white] node[above] {}  ++ (-2.5,0);
\end{scope}

\begin{scope}[xshift=10 cm]
\tikzstyle{every state}=[draw=blue!50,very thick,fill=blue!20]
  \node[state]			(vw) 			{\large{$vw$}};
  \node[state]          	(w) [above left =of vw]	{\large{$w$}};
  \node[state]          	(v) [below left =of vw]	{\large{$v$}};
\tikzset{mystyle/.style={orange}} 
%\path (v)     edge [mystyle]    node   {$3$} (w);
\path[->]	(v) edge 		node[below right] {\large{$c_{a}$}} (vw);
\path[->]	(w) edge 		node[above right] {\large{$0$}} (vw);
\path[->]	(v) edge 		node[left] {\large{$c_{\hat a}$}} (w);
\path[->]		(w) edge[mystyle] 	node[above] {\large{$b_w$}}  ++ (-1.5,0);
\path[->]		(v) edge[mystyle] 	node[above] {\large{$b_v$}}  ++ (-1.5,0);
%\path[<-]		(vw) edge[mystyle,color=white] node[above] {}  ++ (3,0);
\path[->]		(vw) edge[mystyle] 	node[above] {\large{$b_{vw}$}}  ++ (1.5,0);
\end{scope}
\end{tikzpicture}
		}
	\end{center}
	\caption{
	The transition from the left to the middle, which is done for each arc, removes the 
	capacity constraint. From the middle to the right: In order to balance the $x_as_a$,
	we introduce the arc $\hat a = (v,w)$ with high cost $c_{\hat a}$ and reroute flow 
	along it. The direction of $\hat a$ depends on a tree solution $z$ in $G_0$. It is flipped if $z_a \le u_a / 2$.}
\label{fig:remcapfig}
\end{figure}
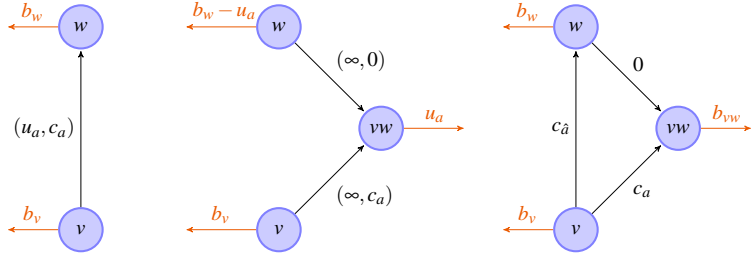

% We remark that we can even assume $|b'(v)|>0$ for all $v\in V_1$: 
% If $b'(v)=0$, then for any feasible flow $x$ it holds that $x_a=u_a$
% for all $a\in \din(v)$. So we remove all arcs $a\in \din(v)$ from 
% the network and replace $b'(v)$ by $b'(v) - u(\din(v))$.

\subsection{Finding the Initial Flow}
Recall the equivalent form of the potential function in \eqref{equipot}.
It illustrates that the potential becomes small if the ratio between the 
arithmetic and the geometric mean does. This in turn is the case if the variance of 
the $x_a s_a$ is low over all $a\in A$. This observation is crucial for 
our Algorithm~\ref{balarcs} that finds an initial flow with low potential.
Our aim is to balance the flows on $(v,vw)$ and $(w,vw)$, by introducing
the arc $(v,w)$ or $(w,v)$, see Figure~\ref{fig:remcapfig} from middle to right.
Since we perform the two transitions of Figure~\ref{fig:remcapfig} together, we will 
refer to the resulting graph as $G_1=(V_1,A_1)$ with $|V_1|=n_1$ and $|A_1|=m_1=3m$.
For the sake of presentation, we assume w.l.o.g.~that 
all capacities $u_a$ of the original graph $G$ were odd
%For $G_1$ this yields that $b_{vw}$ is odd for all $vw\in R$, 
such that $z_a - u_a/2 \ne 0$ for all integers $z_a$.
\footnote{This is justified
by the following argument: If the capacity of an arc is even, then we add a parallel 
arc of capacity 1 and reduce the capacity of the original arc by 1.}
\begin{algorithm}[ht]
\DontPrintSemicolon
\SetKwData{Left}{left}\SetKwData{This}{this}\SetKwData{Up}{up}
\SetKwFunction{Union}{Union}\SetKwFunction{FindCompress}{FindCompress}
\SetKwInOut{Input}{Input}\SetKwInOut{Output}{Output}
\Input{$G_0=(V_0,A_0)$, parameter $t$.}
\Output{Graph $G_1$, primal and dual solutions $x$, and $y,s$, such that $x_a s_a\in [t,t+CU/2]$.}
\BlankLine
  Compute a tree solution in $G_0$ and obtain an integral (not necessarily feasible) 
  flow $z$.\;
  \For{every arc $a\in A_0$}{
    Insert node $vw$, arcs $\vvw a=(v,vw)$, $\wvw a=(w,vw)$ with $c_{\vvw a}=c_a$, 
    $c_{\wvw a} = 0$, set $x_{\vvw a} = x_{\wvw a} = u_a/2$ \;
    \If{$z_a >u_a/2$} {
      Replace $a$ by $\hat{a}=(v,w)$
      }
    \Else{
      Replace $a$ by $\hat{a}=(w,v)$
    }
		$c_{\hat{a}}= \big\lceil t/|z_a - u_a/2| \big\rceil$, 
    $x_{\hat a} := |z_a -u_a/2|$, 
    $y_{vw}:= - 2t/u_a$ and $y_v,y_w:=0$\;
  }
  \Return the resulting graph $G_1=(V_1,A_1)$ and $x,y$ with corresponding slacks $s$.
\caption{Balance Arcs\label{balarcs}}
\end{algorithm}

\begin{theorem}
  \label{initpot}
  Let $G_1=(V_1,A_1)$, $x,y,s$ be output by Algorithm~\ref{balarcs} and $\Gamma:= \max\{C,U,\|b\|_1/2\}$.
  \begin{enumerate}
    \item It holds that
    $x_a s_a \in [t,t+ \Gamma^2]$ for all 
    $a\in A_1$.
    \item Setting $t=m\Gamma^3$ and $p=\min\{k\in\ZZ: k^2 \ge m_1\}$ yields
    $
      \Pot(x,s) = O(\sqrt m \log(n\gamma)).
    $
  \end{enumerate}
\end{theorem}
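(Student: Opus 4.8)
The plan is to prove both parts by a direct computation on the three arcs that Algorithm~\ref{balarcs} creates for each original arc $a=(v,w)\in A_0$, namely $(v,vw)$ and $(w,vw)$ (both entering the new node $vw$) and $\hat a$; ranging over $a\in A_0$, these exhaust $A_1$, so it suffices to control $x_e s_e$ on each of the three.

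\textbf{Part 1.} Using the slack identity $s_e=c_e+y_u-y_{u'}$ for $e=(u,u')$ together with the prescribed potentials $y_v=y_w=0$ and $y_{vw}=-2t/u_a$, I get $s_{(v,vw)}=c_a+2t/u_a$, $s_{(w,vw)}=2t/u_a$, and $s_{\hat a}=c_{\hat a}=\lceil t/d_a\rceil$ where $d_a:=|z_a-u_a/2|$. Multiplying by the prescribed primal values $x_{(v,vw)}=x_{(w,vw)}=u_a/2$ and $x_{\hat a}=d_a$ gives
\[
 x_{(v,vw)}\,s_{(v,vw)}=t+\tfrac12 u_a c_a,\qquad x_{(w,vw)}\,s_{(w,vw)}=t,\qquad x_{\hat a}\,s_{\hat a}=d_a\lceil t/d_a\rceil\in[t,\,t+d_a).
\]
The first value lies in $[t,t+\Gamma^2]$ since $c\ge 0$ and $u_ac_a\le CU\le\Gamma^2$; the second trivially does; and $d_a>0$ by the oddness assumption (so $z_a\ne u_a/2$), while $d_a\le|z_a|+u_a/2\le\Gamma^2$ follows from $|z_a|\le\|b\|_1/2$ — the standard bound for a tree solution $z$, since non-tree arcs carry no flow and a tree arc carries $\pm b(S)$ for a shore $S$ of the cut it induces, with $|b(S)|\le\|b\|_1/2$ because $\ones^Tb=0$. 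I would also record that $x$ and $y,s$ are genuinely feasible: $x>0$ and $s>0$ by the above, $A^Ty+s=c$ by definition of the slacks, and $x$ obeys flow conservation because the three new arcs route exactly $z_a$ units across the gadget under the modified demands $b_{vw}=u_a$ and $b_w$ decreased by $u_a$ (the case split $z_a\gtrless u_a/2$ being precisely the orientation of $\hat a$ that makes the net flow out of $v$ equal $z_a$).

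\textbf{Part 2.} Now I would substitute $x_a s_a\in[t,t+\Gamma^2]$ (for all $m_1=3m$ arcs of $G_1$, with $q=m_1+p$) into the equivalent form~\eqref{equipot} of the potential. Since $x^Ts=\sum_{a\in A_1}x_a s_a\le m_1(t+\Gamma^2)$, the arithmetic mean of the $x_a s_a$ is $\le t+\Gamma^2$, and their geometric mean is $\ge t$,
\begin{align*}
 \Pot(x,s) &= p\ln(x^Ts)+m_1\ln\Big(\tfrac1{m_1}\textstyle\sum_{a}x_a s_a\Big)-m_1\ln\Big(\sqrt[m_1]{\textstyle\prod_{a}x_a s_a}\Big)\\
 &\le p\ln\big(m_1(t+\Gamma^2)\big)+m_1\ln\Big(1+\tfrac{\Gamma^2}{t}\Big)\le p\ln\big(m_1(t+\Gamma^2)\big)+m_1\tfrac{\Gamma^2}{t},
\end{align*}
using $\ln(1+\xi)\le\xi$. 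With $t=m\Gamma^3$ we have $\Gamma^2/t=1/(m\Gamma)\le 1/m$, so the last term is $\le m_1/m=3=O(1)$; and since $p=\lceil\sqrt{m_1}\rceil=O(\sqrt m)$ and $m_1(t+\Gamma^2)\le 6m^2\Gamma^3$, the first term is $O(\sqrt m(\log m+\log\Gamma))=O(\sqrt m\log(n\Gamma))$ (using $m\le n^2$), which is the claimed bound $O(\sqrt m\log(n\gamma))$ up to the paper's $\tilde O$-convention since $\log\Gamma=O(\log n+\log\gamma+\log\|b\|_\infty)$. The choice $p=\min\{k\in\ZZ:k^2\ge m_1\}$ is the smallest integer with $p^2\ge m_1$, which is exactly what Lemma~\ref{constred} and Theorem~\ref{runtimegeneral} need, while keeping $p=O(\sqrt m)$.

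\textbf{Main obstacle.} The arithmetic is light, and the two terms of $\Pot$ are controlled almost verbatim by the principle "choose $t$ large enough that $\Gamma^2/t\le 1/m$ yet $\log t=O(\log(m\Gamma))$". The step most prone to slips is the feasibility bookkeeping in Part 1: one has to track the demand shift $b_{vw}=u_a$ and the decrease of $b_w$ simultaneously with the orientation of $\hat a$, verifying flow conservation at $v$, $w$ and $vw$ at once, and remember throughout that $C,U,\|b\|_1$ here refer to the already-preprocessed instance (made finite, with non-negative costs and odd capacities). The only genuinely quantitative point is $d_a\le\Gamma^2$, which rests entirely on $z$ being a \emph{tree} solution so that $|z_a|\le\|b\|_1/2$.
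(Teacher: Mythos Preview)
Your proof is correct and follows essentially the same route as the paper's. The only cosmetic differences are that you work with the arithmetic/geometric-mean form~\eqref{equipot} of $\Pot$ while the paper plugs into the original definition (the two computations are line-for-line equivalent), and that you add the explicit feasibility bookkeeping for $x$ and $(y,s)$, which the paper omits. Your bound $d_a\le|z_a|+u_a/2\le\Gamma^2$ is a touch looser than the paper's $d_a\le\Gamma$ (and, strictly speaking, $|z_a|+u_a/2\le \tfrac32\Gamma$, so the step to $\Gamma^2$ tacitly uses $\Gamma\ge 2$), but this is exactly the slack the theorem statement already allows.
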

\begin{proof}
\begin{enumerate}
  \item
  Let $a\in A_0$ be any arc in $G_0$. We have
   $
     x_{\vvw a} = 
     x_{\wvw a} = 
     u_a/2.
   $
  It holds that, 
  \begin{align*}
    &x_{\vvw a} s_{\vvw a} = 
    \frac{u_a}{2} \left(c_{\vvw a} + \frac{2t}{u_a}\right)
    = t + \frac{ u_a c_a}{2} \le t + \Gamma^2, 
  \quad 
  % \text{and} \quad
    x_{\wvw a} s_{\wvw a} =
    \frac{u_a}{2} \left(c_{\wvw a} + \frac{2t}{u_a} \right)
    =t
  \quad \text{and}\quad\\
  %  \]
  % For $\hat a$, we have
  % \begin{align*}
    &x_{\hat a} s_{\hat a}
   % = x_{\hat a} c_{\hat a}
   \ge \Big|z_a - \frac{u_a}{2}\Big| %&
   \frac{t}{|z_a - \frac{u_a}{2}|}  
    = t \quad \text{and} \quad %\\
   %&
   x_{\hat a} s_{\hat a}
   \le \Big|z_a - \frac{u_a}{2}\Big| \left(\frac{t}{|z_a - \frac{u_a}{2}|} + 1 \right)  
   %\le t + \frac{u_a}{2} 
   \le t + \Gamma.
   \end{align*}
   % In summary, for all arcs $a\in A_1$, it holds that $x_a s_a \in [t,t + CU/2]$.

\item
  We consider the potential function with $q=m_1+p$, we will fix $p$ below.
  We have
  \begin{align*}
   \Pot(x,s) :&= q\ln ( x^Ts ) -\sum_{a\in A_1} \ln (x_a s_a) - m_1 \ln m_1
   \le q \ln \big( m_1  t + 2m\Gamma^2 \big)
   - m_1 \ln (m_1 t)\\
   &\le q \ln \Big(1 + \frac{m_1 \Gamma^2}{m_1 t}\Big) + p \ln (m_1 t)
   % \le q \ln \Big( 1 + \frac{1}{m_1} \Big)
   % +  p \ln (m_1^2 CU)%\\
   %&
   \le \frac{m_1+p}{m\Gamma} + p\ln (m_1^2 \Gamma^3),
   \text{ since }
   t=m\Gamma^3.
  \end{align*}
  For $p=\min\{z\in \ZZ : z^2\ge m_1 \}$, we get
  $ \Pot(x,s) = O(\sqrt m \log n\Gamma) = O (\sqrt{m} \log n \gamma)$.\qedhere
  \end{enumerate}
  \end{proof}
Algorithm~\ref{balarcs} can be implemented in $O(m)$ time.
We remark that, due to the high costs of the arcs in $A_1\setminus A_0$, there will 
never be flow on them in an optimal solution.
In particular, these arcs are more expensive than any path in the original
network because $c_{\hat a}=\lceil t/|z_a-u_a/2| \rceil \ge mCU$. Therefore, 
the optimum of the problem is not changed by the introduction of the arcs 
$\hat a$. Note that the resulting network is always feasible.
This is why we can assume feasibility in Section~\ref{sec:crossover}.

\section{Summary}
We first run Algorithm~\ref{balarcs} on the input graph
$G_0$ to construct the auxiliary network $G_1$.
We then initialize Algorithm~\ref{potredalg} with the obtained interior points. If 
$\lceil b^Ty^0\rceil >mCU$, the problem in $G_0$ was infeasible, since any solution
in $G_0$ is bounded by $mCU$. Otherwise, we apply Algorithm~\ref{crossoveralg} and obtain 
optimal integral potentials $y$ in $G_1$. Let $H_1$ be the admissible network, 
i.e.\ the graph $G_1$ with all arcs with dual slack $0$. Consider $H_0$, the graph
resulting by removing all arcs $\hat a$ from $H_1$ that were introduced by Algorithm~\ref{balarcs}.
By a max-flow computation 
we compute a feasible solution $x$ in $H_0$, which is optimal in $G_0$ by complementary 
slackness.
If $H_0$ is however infeasible, there is a set $S$ with $b(S)\le -1$ and 
$\dout_{H_0}(S)=\emptyset$~\cite[Corollary 11.2h]{schrijver}. Since $y$ is optimal 
in $G_1$, there is an arc $\hat a\in \dout_{G_1}(S)$ with
$s_{\hat a}=0$, thus $\hat a\in \dout_{H_1}(S)$. It follows that there is always a 
feasible and integral solution $z$ in $H_1$ with $z_{\hat a} \ge 1$ that is optimal 
in $G_1$. With $c_{\hat a} \ge mCU$, we conclude that the cost of $z$ is larger than 
$mCU$, which contradicts $\lceil b^Ty^0\rceil \le mCU$.
Since the max-flow computation requires $O(m^{3/2}\log(n^2/m)\log U )$
if it is carried out with the algorithm of Goldberg and 
Rao~\cite{DBLP:conf/focs/GoldbergR97}, this concludes the proof of 
Theorem~\ref{main theorem}.

\bibliography{cites}

\begin{thebibliography}{10}

\bibitem{DBLP:journals/corr/Madry13}
A.~M{\k{a}}dry, ``{Navigating Central Path with Electrical Flows: from Flows to
  Matchings, and Back},'' in {\em 54th Annual IEEE Symposium on Foundations of
  Computer Science (FOCS)}, 2013.

\bibitem{DBLP:conf/stoc/DaitchS08}
S.~I. Daitch and D.~A. Spielman, ``Faster approximate lossy generalized flow
  via interior point algorithms,'' in {\em STOC} (C.~Dwork, ed.), pp.~451--460,
  ACM, 2008.

\bibitem{DBLP:conf/stoc/SpielmanT04}
D.~A. Spielman and S.-H. Teng, ``{Nearly-linear time algorithms for graph
  partitioning, graph sparsification, and solving linear systems},'' in {\em
  STOC} (L.~Babai, ed.), pp.~81--90, ACM, 2004.

\bibitem{DBLP:conf/focs/KoutisMP10}
I.~Koutis, G.~L. Miller, and R.~Peng, ``{Approaching Optimality for Solving SDD
  Linear Systems},'' in {\em FOCS}, pp.~235--244, IEEE Computer Society, 2010.

\bibitem{DBLP:conf/stoc/KelnerOSZ13}
J.~A. Kelner, L.~Orecchia, A.~Sidford, and Z.~A. Zhu, ``{A Simple,
  Combinatorial Algorithm for Solving SDD Systems in Nearly-Linear Time},'' in
  {\em STOC} (D.~Boneh, T.~Roughgarden, and J.~Feigenbaum, eds.), pp.~911--920,
  ACM, 2013.

\bibitem{edmonds2}
J.~Edmonds and R.~M. Karp, ``{Theoretical Improvements in Algorithmic
  Efficiency for Network Flow Problems},'' in {\em Combinatorial Structures and
  Their Applications}, pp.~93--96, Gordon and Breach, New York, 1970.

\bibitem{schrijver}
A.~Schrijver, {\em {Combinatorial Optimization: Polyhedra and Efficiency}}.
\newblock Algorithms and combinatorics, Springer, 2003.

\bibitem{DBLP:conf/stoc/Orlin88}
J.~B. Orlin, ``{A Faster Strongly Polynominal Minimum Cost Flow Algorithm},''
  in {\em STOC} (J.~Simon, ed.), pp.~377--387, ACM, 1988.

\bibitem{Goldberg:1990:FMC:92217.92225}
A.~V. Goldberg and R.~E. Tarjan, ``{Finding minimum-cost circulations by
  successive approximation},'' {\em Math. Oper. Res.}, vol.~15, pp.~430--466,
  July 1990.

\bibitem{DBLP:journals/mp/AhujaGOT92}
R.~K. Ahuja, A.~V. Goldberg, J.~B. Orlin, and R.~E. Tarjan, ``Finding
  minimum-cost flows by double scaling,'' {\em Math. Program.}, vol.~53,
  pp.~243--266, 1992.

\bibitem{DBLP:journals/siamcomp/GabowT89}
H.~N. Gabow and R.~E. Tarjan, ``{Faster Scaling Algorithms for Network
  Problems},'' {\em SIAM J. Comput.}, vol.~18, no.~5, pp.~1013--1036, 1989.

\bibitem{DBLP:conf/stoc/Karmarkar84}
N.~Karmarkar, ``{A New Polynomial-Time Algorithm for Linear Programming},'' in
  {\em STOC} (R.~A. DeMillo, ed.), pp.~302--311, ACM, 1984.

\bibitem{DBLP:journals/mp/Ye91}
Y.~Ye, ``An {$O(n^3 L)$} potential reduction algorithm for linear
  programming,'' {\em Math. Program.}, vol.~50, pp.~239--258, 1991.

\bibitem{DBLP:conf/focs/Vaidya89a}
P.~M. Vaidya, ``{Speeding-Up Linear Programming Using Fast Matrix
  Multiplication (Extended Abstract)},'' in {\em FOCS}, pp.~332--337, IEEE
  Computer Society, 1989.

\bibitem{DBLP:journals/mp/WallacherZ92}
C.~Wallacher and U.~Zimmermann, ``A combinatorial interior point method for
  network flow problems,'' {\em Math. Program.}, vol.~56, pp.~321--335, 1992.

\bibitem{DBLP:books/daglib/0069809}
R.~K. Ahuja, T.~L. Magnanti, and J.~B. Orlin, {\em {Network flows - Theory,
  Algorithms and Applications}}.
\newblock Prentice Hall, 1993.

\bibitem{DBLP:conf/stoc/AbrahamN12}
I.~Abraham and O.~Neiman, ``{Using Petal-Decompositions to Build a Low Stretch
  Spanning Tree},'' in {\em STOC} (H.~J. Karloff and T.~Pitassi, eds.),
  pp.~395--406, ACM, 2012.

\bibitem{DBLP:conf/soda/Goldberg93}
A.~V. Goldberg, ``{Scaling Algorithms for the Shortest Paths Problem},'' in
  {\em SODA} (V.~Ramachandran, ed.), pp.~222--231, ACM/SIAM, 1993.

\bibitem{DBLP:conf/focs/GoldbergR97}
A.~V. Goldberg and S.~Rao, ``{Beyond the Flow Decomposition Barrier},'' in {\em
  FOCS}, pp.~2--11, 1997.

\end{thebibliography}
\bibliographystyle{ieeetr}

\end{document}